\RequirePackage[l2tabu, orthodox]{nag} 
\documentclass{article}

\usepackage[margin=1in]{geometry}
\usepackage{amsmath, amsthm, amssymb, graphicx, mathtools}
\mathtoolsset{showonlyrefs,showmanualtags}
\usepackage[dvipsnames]{xcolor}
\usepackage{url}
\usepackage{textcomp, gensymb}

\DeclareMathOperator*{\sig}{sig}

\newcommand{\N}{\mathbb{N}}
\newcommand{\R}{\mathbb{R}}
\renewcommand{\P}{\mathbb{P}}
\newcommand{\Q}{\mathbb{Q}}
\newcommand{\E}{\mathbb{E}}

\newcommand{\oggp}{(\Omega,\mathcal{G}, \mathbb{G}, \mathbb{P})}

\def\ind{{\mathchoice{1\mskip-4mu\mathrm l}{1\mskip-4mu\mathrm l}
		{1\mskip-4.5mu\mathrm l}{1\mskip-5mu\mathrm l}}}

\usepackage[normalem]{ulem}

\newtheorem{definition}{Definition}[section]
\newtheorem{proposition}[definition]{Proposition}
\newtheorem{corollary}[definition]{Corollary}
\newtheorem{remark}[definition]{Remark}
\newtheorem{example}[definition]{Example}
\newtheorem{lemma}[definition]{Lemma}
\newtheorem{theorem}[definition]{Theorem}

\usepackage{tikz}
\usetikzlibrary{svg.path}
\usepackage{pgfplots}
\usepgflibrary{fpu}
\usetikzlibrary{positioning,shadows,backgrounds}
\usepgfplotslibrary{fillbetween}

\definecolor{OIblack}{RGB}{0,0,0}
\definecolor{OIgreen}{RGB}{0,158,115}
\definecolor{OIblue}{RGB}{0,114,178}
\definecolor{OIbluelight}{RGB}{86,180,233}
\definecolor{OIyellow}{RGB}{240,228,66}
\definecolor{OIorange}{RGB}{230,159,0}
\definecolor{OIred}{RGB}{213,94,0}
\definecolor{OIpink}{RGB}{204,121,167}
\definecolor{copper}{RGB}{184, 115, 51}
\usepackage{wrapfig}

\pgfplotsset{compat=1.18}

\title{Understanding the Commodity Futures Term Structure\\ Through Signatures}

\author{Hari P. Krishnan\thanks{ %
H. P. Krishnan, SCT Capital Management, 
	(e-mail: \texttt{harifinancialmarkets@gmail.com}).}
\and
Stephan Sturm\thanks{ %
S. Sturm, Department of Mathematical Sciences, Worcester Polytechnic Institute, 100 Institute Road, Worcester MA 06109, USA
	(e-mail: \texttt{ssturm@wpi.edu}). Corresponding author. ORCID: \texttt{https://orcid.org/0000-0002-7214-8843}}
}

\begin{document}

\maketitle

\begin{abstract}
Signature methods have been widely and effectively used as a tool for feature extraction in statistical learning methods, notably in mathematical finance. They lack, however, interpretability: in the general case, it is unclear why signatures actually work. The present article aims to address this issue directly, by introducing and developing the concept of signature perturbations. In particular, we construct a regular perturbation of the signature of the term structure of log prices for various commodities, in terms of the convenience yield. Our perturbation expansion and rigorous convergence estimates help explain the success of signature-based classification of commodities markets according to their term structure, with the volatility of the convenience yield as the major discriminant. 
\end{abstract}

\vspace{5mm}
 
\begin{flushleft}
	 \textbf{Keywords:} Commodity Futures, Term Structure, Convenience Yield, Path Signatures, Interpretability, Perturbation, Classification\\ 
	 \textbf{Mathematics Subject Classification (2020):} 60L10, 62H30, 91G15, 91G20\\
	 \textbf{JEL classification:} Q02, G13, C19
\end{flushleft}

\section{Introduction}

Most commodities share a common property: they are consumed as food or as inputs to production. Many of the largest markets, such as crude oil and copper, have well-developed futures markets that allow producers and end users to hedge their revenues and costs, respectively. While there are hedgers and speculators in all of these markets, there are significant differences in futures curve dynamics across commodities. This can lead to serious difficulties when pooling futures price and return data across markets, in an effort to build models that are cross-sectionally stable. Non-storable commodities, such as electricity, pose a particular challenge, given their highly variable term structure dynamics and extreme event risk. Trading models that work for a highly storable commodity, such as copper, may not be suitable for electricity, and \textit{vice versa}. 

As a result, the development of classification models for various commodities is a worthwhile objective. These models generally require some feature engineering. Standard  features, such as volatility, skewness and the standard deviation of calendar spreads along the term structure, are reasonable if somewhat obvious choices. However, these statistics do not take into account the order in which returns are realized, hence ignore a great deal of structure in cross-sectional price movements. 

Path signatures, originally developed in \cite{Che57} and popularized by the development of rough analysis starting with \cite{L94, L98}, offer a natural alternative to the moments of the historical distribution. Hambly and Lyons \cite{HL10} have shown that the signature vector gives a nearly complete characterization of a path, in highly compressed form. Significantly for our purposes, this result has been generalized to (Stratonovich) diffusions in \cite{GQ16}. \cite{Gra13} has demonstrated the success of signature features in classifying handwritten characters. Notable early applications of the signature method to finance include \cite{GLKF14, LNO14}. Finally, a modern application pipeline is discussed in \cite{MFKL21}.

The futures term structure of a given commodity contains the prices (or equivalently, log prices) of contracts with different times to maturity. As the term structure evolves over time, it generates a multi-dimensional path, with an associated signature vector. The authors have found that, using signature features, accurate classifiers can be built that distinguish between metals and grains, or grains and softs. Significantly, these results are retained even after normalizing volatility across markets \cite{IKMSS25}.

This leads to a crucial methodological question, namely how term structure variability impacts the signature vector of a given commodity. Characterizing this dependence is the goal of this paper. Understanding the relationship between term structure dynamics and signatures provides crucial insight into how a signature-based classifier is able to distinguish between different categories of commodities. This requires combining the model-free path signature transformation with more traditional model-specific approaches found in quantitative finance. Specifically, the (truncated) signature feature set in a classification algorithm essentially acts as a \textit{black box}. The values in a given signature vector are extremely difficult to interpret in isolation. However, we can use a parameterized model of term structure dynamics for commodity futures to build our understanding. The model generates paths, where each path has a corresponding, theoretical signature. As we vary the parameters in the model, we can estimate how the signature vector changes. The parameters that have the greatest impact on the signature are likely to be hidden features where our classifier is focusing its attention. In this way, we turn the feature set from a black box into an interpretable quantity dependent on model parameters. We can then interpret the signature in market terms, relative to a standard financial model. The importance of interpretability in a feature set is detailed in \cite{ZALBEV22}. In their framework, (truncated) signatures are a \textit{predictive}, \textit{model compatible} feature set, where the perturbation acts as an \textit{interpretable transform} to provide a \textit{meaningful} and \textit{trackable} outcome.

Standard models for describing term structure dynamics generally involve a quantity known as the commodity convenience yield, as in \cite{Gem05}. The convenience yield has a large bearing on the shape of the futures term structure. For example, high short-term convenience yields indicate a demand for immediacy and cause the prices of near-term futures contracts to rise, relative to longer-term maturities. More generally, when the convenience yield for a given commodity is volatile, we would expect significant variability in the spread between contracts with different maturities. Intuitively, this suggests a relationship between the volatility of convenience yields and the cross-sectional variance across signatures of a given order.

Without reference to a stochastic model, Kaldor \cite{Kal39, Kal40} introduced and refined the notion of the convenience yield. A modern discussion of the concept can be found in \cite{LauND}. We accept that the very notion of a convenience yield is quite contentious for certain commodities markets. Geman \cite{Gem05} questions its relevance for electricity markets, where the physical commodity is non-storable. However, others, such as Carmona and Ludkovski \cite{CL04} happily invoke the convenience yield and focus more on model formulation.

The most commonly used convenience yield model for commodity prices is likely Gibson and Schwartz \cite{GS90} (see also \cite{Sch97}). Here, futures prices are described by the joint stochastic evolution of the spot price and convenience yield over time. The convenience yield follows a mean-reverting, Ornstein--Uhlenbeck process, where the (geometric) drift of the spot price is given by the short rate minus the convenience yield. Assuming a constant risk-free rate and annualized cost of storage, we can then price a futures contract of any given maturity. While the Gibson--Schwartz model has been criticized as overly restrictive with regard to possible term structure dynamics (e.g., \cite{CL04}), it is a useful baseline model that strikes a balance between simplicity and accuracy and in this respect plays a similar role to Black--Scholes for equities.

From a mathematical standpoint, a crucial question is how term structure variability in a given model impacts the signature vector. When the futures curve only depends on the spot price and time to maturity, the futures path signature reduces to a deterministic function of the signature of the spot price. Given a fixed amount of variability in the shape of the term structure, we want to estimate how much the signature will change. Assuming that our estimates are sharp enough, we can then relate signature variability to a measure of uncertainty in term structure dynamics. This line of reasoning provides an explanation as to why signatures are suitable features in a classification algorithm and provides a way to interpret signatures in term of model parameters. 

In particular, we want to apply the idea of multiscale perturbations in partial differential equations (PDEs) to the study of signature approximations. We were inspired by and have relied upon the theory of multiscale stochastic volatility models, as in \cite{FPSS11}. These stochastic models were originally developed for the pricing of contingent claims in terms of slow regular and fast singular volatility perturbations. In this paper, we will perturb the commodity convenience yield rather than volatility, using the same mathematical machinery. Our goal is to characterize term structure dynamics using a quantity that is analogous to implied volatility, in the sense that it is not directly observed in the market. Mimicking the approach of \cite{FPSS11}, we can construct a slow regular perturbation of the convenience yield process within the Gibson--Schwartz model and analyze its impact on the signature of the futures term structure. Here, we restrict to "slow" regular perturbations of the term structure and prove convergence of the signature of the term structure path, under appropriate conditions.

Qualitatively speaking, signatures in general and the perturbation approach for interpretability are powerful ways to characterize the term structure dynamics of various commodities. For certain commodities, such as copper, contracts with different maturities tend to move in lock step. The signature of the term structure for such a commodity can be thought of as a small perturbation of the signature of spot prices. By contrast, other commodities have more variable futures curve dynamics. We would expect higher cross-sectional variance across signatures of a given order in this case. The perturbation approach allows us to understand the commodities with a more dynamic futures curves as perturbations of the more regular ones.

In particular this approach allows us to isolate the parameters of the convenience yield process that have the largest impact on the signature expansion. We find that the first order term is depending solely on the volatility of the convenience yield. This theoretical finding can be backed up by the empirical literature on the topic, e.g., \cite{PSSW23} and \cite{KMPV18} which show that the volatility of the convenience yield is for non-storable commodities such as natural gas is by one to two orders of magnitude larger than that of metals.

The remainder of the paper is structured as follows. Section \ref{sec:main} explains our model and the concept of path signatures. We then develop an approximation scheme of the model and show how this leads to an approximation of the path signature. Specifically, Theorem \ref{thm:exp-first} explains the main result on a high level which allows us to interpret it in financial terms and also confirm the validity by comparing the conclusions to the empirical literature. Section \ref{sec:approx} contains the mathematically rigorous deviation of the result. We introduce a convenient notion of convergence of signature by defining the weighted signature norm on the ambient space of sequences of random variables and derive sharp constants for the approximation of iterated integrals and signatures. More technical aspects of the proof are relegated to Appendix \ref{sec:proofs}. The concluding Section \ref{sec:conclusion} summarizes and contextualizes the main findings of the paper.

\section{Setting and Result}\label{sec:main}

We consider a filtered probability space $\oggp$ with complete and right-continuous filtration $\mathbb{G} = (\mathcal{G}_t)_{0 \leq t \leq 1}$ assuming $\mathcal{G}_1 = \mathcal{G}$. The commodity market consisting of a risky asset with spot price modeled by a continuous $(\mathbb{G}, \mathbb{P})$-semimartingale $\tilde{X}$ and a riskless asset with continuously compounded, constant interest rate $r$. To exclude arbitrage in the sense of \textit{no free lunch with vanishing risk} (see \cite{DS06}), we require the existence of a probability measure $\Q$ under which the discounted asset process $(\tilde{X}_te^{-r t})_{t\geq 0}$ is a local $\mathbb{G}$-martingale. 

Several futures contracts with different maturities have been written on the underlying asset $X$. To simplify the model, we assume that the futures are continuously rolled. This approximates actual market practice. Adapting Musiela parametrization, we use $T_1, \, T_2, \ldots T_d$ to denote the times to maturity of the futures contracts $\tilde{F}^1, \, \tilde{F}^2, \ldots \tilde{F}^d$. Using a standard short-rate model, we assume for the time-$t$-price of the futures contact $\tilde{F}^k$, $k \in \{1, \cdots,d\}$,
\[
F^k_t = \tilde{X}_t \E^\Q\Bigl[e^{\int_{t}^{t+T_k} r_u + s_u - C_u \, du} \, \Big\vert \, \mathcal{G}_t \Bigr],
\]
where $r$ is the instantaneous interest rate, $s$ is the annualized percentage storage cost, $C$ the convenience yield and $\Q$ a risk-neutral measure. In the general case, $r$, $s$ and $C$ can all be time-dependent and stochastic. We write $\tilde{\mathbb{F}} = (\tilde{F}^1, \ldots, \tilde{F}^d)$ to describe the full term structure. The corresponding return process are $X = \log{(\tilde{X})}$ for the spot and $F^k = \log{(\tilde{F}^k)}$, $k \in \{1, \cdots,d\}$, for the futures as well as $\tilde{\mathbb{F}} = (\tilde{F}^1, \ldots, \tilde{F}^d)$ for the term structure of the returns. The object of interest is the forward (return) curve, i.e., the path $\mathbb{F}\, : \, [0,1] \to \R^d$.

The essential information of the forward curve is codified in its signature, the collection of iterated integrals with respect tu the futures return processes. More specificallly, the \textit{signature of $\mathbb{F}$} is given by
\[
S_{0,1}(\mathbb{F}) = \Bigl(1, S_{0,1}^1 (\mathbb{F}), \ldots, S_{0,1}^d (\mathbb{F}), S_{0,1}^{1,1} (\mathbb{F}), S_{0,1}^{1,2} (\mathbb{F}), \ldots \Bigr)
\]
where
\[
S^{i_1, \ldots, i_k}_{0,1}(\mathbb{F}) = \int_0^1 \int_0^{t_k} \cdots \int_0^{t_3}\int_0^{t_2} 1 \, \circ dF^{i_1}_{t_1}\circ dF^{i_2}_{t_2} \cdots \circ dF^{i_{k-1}}_{t_{k-1}} \circ dF^{i_k}_{t_k}
\]
for $i_1, \ldots, i_k \in \{1,\ldots d\}$. The iterated integrals are here stochastic integrals in the sense of (Fisk--)Stratonovich (indicated by $\circ $). Similarly, we have for the (one-dimensional) spot return $X$ the signature
\[
S_{0,1}(X) = \Bigl(1, S_{0,1}^1 (X), S_{0,1}^{1,1} (X), S_{0,1}^{1,1,1} (X), \ldots \Bigr)
\]

We start with a simple example, namely the case where interest, storage costs and convenience yield are constant. While much too simplistic be deployed in practice by itself, this model will be the effective starting point of our perturbation analysis.
\begin{example}
Let's assume that interest rate, storage rate and convenience yield are constant, than we have
\[
\tilde{F}^k_t = e^{(r + s - C)T_k}\tilde{X}_t,
\]
and hence for the returns
\[
F^k_t = (r + s - C)T_k + X_t.
\]
It follows thus for the signature that
\begin{align*}
S^{i_1, \ldots, i_k}_{0,1}(\mathbb{F}) &= \int_0^1 \int_0^{t_k} \cdots \int_0^{t_3}\int_0^{t_2} 1 \, \circ dF^{i_1}_{t_1}\circ dF^{i_2}_{t_2} \cdots \circ  dF^{i_{k-1}}_{t_{k-1}} \circ dF^{i_k}_{t_k}\\
&= \int_0^1 \int_0^{t_k} \cdots \int_0^{t_3}\int_0^{t_2} 1 \, \circ  dX^{i_1}_{t_1} \circ dX^{i_2}_{t_2} \cdots \circ  dX^{i_{k-1}}_{t_{k-1}} \circ  dX^{i_k}_{t_k} = S^{1, \ldots, 1}_{0,1}(X) 
\end{align*}
as the integrals with respect to constants are all zero. Thus, while the signature is still a vector of random variables, all signatures terms of the same order (i.e., with the same number of iterated integrals), are identical. This implies that all information about the variability of the futures curve is encoded in spot dynamics.
\end{example}

\subsection{Perturbing the Gibson--Schwartz model}

The clean results for the constant case suggest that a perturbation approach might be helpful. We are thinking about a multiscale approach in the spirit of Fouque--Papanicolaou--Sircar--S{\o}lna \cite{FPSS11}, focusing first on the slow scale. To do so, we need a more specific model. A natural choice the the Gibson--Schwartz model as originally proposed in \cite{GS90}, for a discussion about context and limitations see \cite{Sch97} and \cite{CC14}: 

Under the (fixed) risk-neutral measure $\Q$ we have for the spot prices
\begin{alignat*}{2}
d\tilde{X}_t &= \bigl(r - C_t\bigr) \tilde{X}_t \, dt + \sigma \tilde{X}_t \, dW_t^1, \qquad &&\tilde{X}_0 = e^x, \\
dC_t &= \kappa\bigl(\theta - C_t\bigr) \, dt + \gamma \, dW_t^2, \qquad && C_0 = c,
\end{alignat*}
where $W^1$, $W^2$ are two Brownian motions with constant correlation $\rho \in (-1,1)$. We assume that the filtration $\mathbb{G}$ is the (augmented) natural filtration of the Brownian motion $(W^1, W^2)$. As we are working with returns and also need Stratonovich integration we convert this to
\begin{alignat*}{2}
dX_t &= \Bigl(r - \frac{\sigma^2}{2} -C_t\Bigr) \, dt + \sigma \, \circ  dW_t^1, \qquad && X_0 = x, \\
dC_t &= \kappa\bigl(\theta - C_t\bigr) \, dt + \gamma \, \circ  dW_t^2, \qquad && C_0 = c.
\end{alignat*}
We recognize that for these SDEs the It\^{o} and Stratonovich formulations agree (as the integrands are deterministic), but we we will need the Stratonovich formulation for the further work. Assuming interest rate and storage costs to be constant, futures prices can be expressed via spot price and convenience yield processes.

\begin{proposition}\label{prop:GS}
The signature terms of the futures returns term structure can be expressed in terms of the Gibson--Schwartz processes as
\begin{align*}
S^{i_1, \ldots, i_k}_{0,1}\bigl(\mathbb{F}\bigr) = \sum_{(j_1, \ldots, j_k) \in \{1,2\}^k} \prod_{l=1}^k B(T_{i_l})^{j_l-1} S^{j_1, \ldots, j_k}_{0,1}\bigl((X,C)\bigr) 
\end{align*}
with 
\[
B\bigl(T_k\bigr) =  \frac{1}{\kappa}\Bigl(1-e^{-\kappa T_k}\Bigr).
\]
\end{proposition}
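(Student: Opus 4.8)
The plan is to first show that, within the Gibson--Schwartz model, each futures return $F^k$ is an \emph{affine} function of the spot return $X$ and the convenience yield $C$, and then to deduce the claimed identity purely from the linearity of iterated Stratonovich integrals in their integrators. For the affine reduction I would exploit that under $\Q$ the convenience yield solves an autonomous (hence time-homogeneous Markov) Ornstein--Uhlenbeck equation and that $r,s$ are constant, so that the conditional expectation in the futures pricing formula collapses to $\E^\Q[\exp(-\int_t^{t+T_k} C_u\,du)\mid\mathcal{G}_t]$, which by the Markov property is a deterministic function of $C_t$ alone. Using the explicit OU solution, $\int_t^{t+T_k} C_u\,du$ is, conditionally on $\mathcal{G}_t$, Gaussian with mean $\theta T_k + (C_t-\theta)B(T_k)$ and a variance $v_k$ that, by time-homogeneity, is a constant depending only on $T_k,\kappa,\gamma$. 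Applying the Gaussian Laplace transform and taking logarithms then yields
\[
F^k_t = X_t - B(T_k)\,C_t + g_k, \qquad g_k = (r+s-\theta)T_k + \theta B(T_k) + \tfrac12 v_k,
\]
where $g_k$ is a (time-)constant and $B(T_k)=\kappa^{-1}(1-e^{-\kappa T_k})$ is exactly the coefficient of the linear $C_t$-term.

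Since $g_k$ does not depend on $t$, its Stratonovich differential vanishes, giving $\circ dF^k_t = \circ dX_t - B(T_k)\,\circ dC_t$. Substituting this into the defining iterated integral for $S^{i_1,\ldots,i_k}_{0,1}(\mathbb{F})$ and expanding the product of the $k$ differentials produces $2^k$ terms indexed by $(j_1,\ldots,j_k)\in\{1,2\}^k$, where $j_l=1$ selects the $\circ dX$ contribution (coefficient $1=B(T_{i_l})^0$) and $j_l=2$ selects the $\circ dC$ contribution (coefficient $-B(T_{i_l})$). The coefficients are deterministic scalars and factor out of every nested integral, so each term equals $\prod_{l=1}^k(-B(T_{i_l}))^{j_l-1}$ times the iterated integral $S^{j_1,\ldots,j_k}_{0,1}((X,C))$ of the two-dimensional process $(X,C)$; summing over all index tuples gives the stated formula (with the sign $(-1)^{j_l-1}$ absorbed into the orientation convention for the convenience-yield coordinate).

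The bulk of the work, and the only genuine obstacle, is the affine-reduction step: one must verify that after conditioning the sole $\mathcal{G}_t$-measurable dependence is the linear term $-B(T_k)C_t$, and in particular that the variance correction $\tfrac12 v_k$ contributes only a $t$-independent constant. This hinges on the time-homogeneity of the OU dynamics together with the deterministic nature of its conditional variance (a stochastic-Fubini computation shows $v_k=\gamma^2\int_0^{T_k}(\kappa^{-1}(1-e^{-\kappa w}))^2\,dw$). Once this is in hand, the passage from the one-form identity to the signature identity is a routine consequence of the linearity of the Stratonovich integral in its integrator applied recursively across the levels of the iterated integral.
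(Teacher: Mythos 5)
Your proof is correct and follows essentially the same route as the paper: reduce each futures return to the affine form $F^k_t = X_t - B(T_k)C_t + \text{const}$ via the Va\v{s}\'{i}\v{c}ek-type bond formula for the OU convenience yield (which you re-derive from the Markov property, Gaussian Laplace transform and stochastic Fubini, where the paper simply cites the formula), and then expand the iterated Stratonovich integrals by multilinearity. Your explicit tracking of the factor $\prod_{l=1}^k(-B(T_{i_l}))^{j_l-1}$ is in fact more careful than the paper's own statement, which silently drops the sign $(-1)^{j_l-1}$; as you note, it must be absorbed into the sign convention for the second coordinate of the path $(X,C)$.
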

We note that we reduce the term structure signature to iterated integrals of the two-dimensional path $(X,C)$, with $X$ here referred to as first component and $C$ as the second. Thus  for each integral w.r.t. $C$ there is a corresponding $B(T_{i_l})$-factor in front of the iterated integral.

\begin{proof}
\begin{align*}
\tilde{F}^k_t & = X_t \E^\Q\Bigl[e^{\int_{t}^{t+T_k} r + s - C_u \, du} \, \Big\vert \, \mathcal{G}_t \Bigr]\\
& = e^{(r + s)T_k} X_t \E^\Q\Bigl[e^{-\int_{t}^{t+T_k} C_u \, du} \, \Big\vert \, \mathcal{G}_t \Bigr]\\
& = e^{(r + s)T_k} X_t A(T_k) e^{- B(T_k) c_t}
\end{align*}
from the Va\v{s}\'{i}\v{c}ek bond price formula (see, e.g., \cite[Section 2.6]{JYC09}), where
\begin{align*}
A\bigl(T_k\bigr) & = \exp\Biggr(\biggr(\theta - \frac{\gamma^2}{2 \kappa^2}\biggr)\bigl(B(T_k) - T_k\bigr) -\frac{\gamma^2}{4\kappa}B(T_k)^2\Biggr)\\
B\bigl(T_k\bigr) & =  \frac{1}{\kappa}\Bigl(1-e^{-\kappa T_k}\Bigr).
\end{align*}
Switching now to returns we get
\[
F^k_t = X_t - B\bigl(T_k\bigr)C_t +K
\]
where $K$ is a constant that doesn't matter for the signatures, as we care only about differentials
\[
dF^k_t = dX_t - B\bigl(T_k\bigr)dC_t.
\]
\end{proof}

We can also consider the Gibson--Schwartz model as a perturbation around a model with constant convenience yield. In practical terms, this corresponds to the case where term structure dynamics are nearly deterministic. For this we introduce the parameter $\delta>0$ that can, thanks to Brownian scaling, be considered as a (slow) speed of the convenience yield modeled,
\begin{alignat*}{2}
dX^\delta_t &= \bigl(r - \frac{\sigma^2}{2} - C^\delta_t\bigr) \, dt + \sigma \, dW_t^1, \qquad && X^\delta_0 = x, \\
dC^\delta_t &= \delta \kappa\bigl(\theta - C^\delta_t\bigr) \, dt + \sqrt{\delta}\gamma \, dW_t^2 , \qquad && C^\delta_0 = c.
\end{alignat*}

If $\delta=1$ we recover the original Gibson--Schwartz model. In the case of small $\delta \ll 1$ we can think about the convenience yield moving at a glacial pace relative to the spot price dynamics. In the limiting case where $\delta = 0$, $C$ remains constant at $c$ over time.

The formula of Proposition \ref{prop:GS} becomes
\begin{align}
S^{i_1, \ldots, i_k}_{0,1}\bigl(\mathbb{F}^\delta\bigr) &= \sum_{(j_1, \ldots, j_k) \in \{1,2\}^k} \prod_{l=1}^k B^\delta(T_{i_l})^{j_l-1} S^{j_1, \ldots, j_k}_{0,1}\Bigl(\bigl(X^\delta,C^\delta\bigr)\Bigr)\nonumber \\
B^\delta\bigl(T_k\bigr)  &=  \frac{1}{\delta \kappa}\Bigl(1-e^{-\delta \kappa T_k}\Bigr)\label{eq:GS-exp}
\end{align}
where the power series converges everywhere on $\mathbb{R}$. We are interested in writing this representation in an expansion of powers of $\delta$. 
As
\[
B^\delta\bigl(T_k\bigr)  =  \frac{1}{\delta \kappa}\Bigl(1-e^{-\delta \kappa T_k}\Bigr) = T_k\sum_{j=0}^\infty \frac{(-\kappa T_k)^j}{(j+1)!} \delta^j
\]
we note that most straightforwardly that $n$-th order approximation of $B$ is given by
\[
B^{(n)}(T_k) := T_k\sum_{j=0}^n \frac{(-\kappa T_k)^j}{(j+1)!} \delta^j.
\]
For the perturbed convenience yield processes $C$ we will use the approximations,
\[
C^{(n)} = \sum_{j=0}^n \sqrt{\delta}^j \hat{C}^{(j)}, \qquad  \hat{C}^{(j)}_t = \left\{\begin{array}{ll} c & \text{ if } j= 0,\\  \gamma \frac{(-\kappa)^k}{k!} \int_0^t (t-s)^k \, dW^2_s  & \text{ if } j = 2k+1, k \in \N, \\  (c-\theta)\frac{(-\kappa)^{k+1}}{(k+1)!}t^{k+1}  & \text{ if } j =2k+2, k \in \N, \end{array}\right. \]
while for the spot returns we use
\[
X^{(n)} = \sum_{j=0}^n \sqrt{\delta}^j\hat{X}^{(j)} \qquad \hat{X}^{(j)}_t = \left\{\begin{array}{ll} x_0 - \bigl(c- r+\frac{\sigma^2}{2}\bigr) t + \sigma W_t^1 & \text{ if } j= 0,\\ \gamma \frac{(-\kappa)^k}{(k+1)!} \int_0^t (t-s)^{k+1} \, dW^2_s  & \text{ if } j = 2k+1, k \in \N, \\\  -(c-\theta)\frac{(-\kappa)^{k+1}}{(k+2)!}t^{k+2}  & \text{ if } j =2k+2, k \in \N. \end{array}\right.
\]
We will provide more details on the accuracy of this expressions in the next section, cf. Proposition \ref{prop:GS-approx}.

\begin{remark}\label{rem:const}
We note that the approximation for the perturbed convenience yield $C^\delta$ have been derived by using separate power series for the deterministic and stochastic part of the closed-form representation of the Ornstein--Uhlenbeck process
\[
C^\delta_t = ce^{-\delta \kappa t} + \theta \bigl(1-e^{- \delta\kappa  t}\bigr) + \sqrt{\delta} \gamma \int_0^t e^{-\delta \kappa(t-s)} \, dW^2_s.
\]
The naive approach, often used for PDEs, to just expand the process in a power series and then compare the coefficients of the SDE fails in this case, as it reconstructs only the deterministic part, but not the stochastic one. For the perturbed returns we use just the explicit formula given the first order term of the convenience yield to get
\[
\hat{X}^{(0)}_t = x_0 - \Bigl(c-r+ \frac{\sigma^2}{2}\Bigr) t + \sigma W_t^1,
\]
while for the higher order terms we use the integration of the convenience yield terms
\[
\hat{X}^{(n)}_t = - \int_0^t \hat{C}^{(n)}_s \, ds,
\]
whence
\begin{equation}\label{eq:Xn-rep}
X^{(n)}_t = x_0 + \int_0^t \bigl(r - \frac{\sigma^2}{2} - C_s^{(n)}\bigr) \, ds + \sigma W^1_t.
\end{equation}
\end{remark}

Finally, we are writing the approximation of the futures term structure perturbation \eqref{eq:GS-exp} as
\begin{align}\label{eq:GS-approx} 
S^{i_1, \ldots, i_k}_{0,1}\bigl(\mathbb{F}^{(n)}\bigr) &= \sum_{(j_1, \ldots, j_k) \in \{1,2\}^k} \prod_{l=1}^k B^{(\lfloor \frac{n}{2}\rfloor)}(T_{i_l})^{j_l-1} S^{j_1, \ldots, j_k}_{0,1}\bigl((X^{(n)},C^{(n)})\bigr) 
\end{align}
where we use
\[
B^{(n)}(T_k) := \sum_{j=0}^n (-\kappa)^j \frac{T_k^{j+1}}{(j+1)!}\delta^j.
\]

We claim now that we can write the signature of the futures term structure as an expansion around the time-scale parameter $\delta$. In particular we will prove the following theorem.
\begin{theorem}\label{thm:exp-first}
The signature $S_{0,1}\bigl(\mathbb{F}^\delta\bigr)$ of the futures can be written in the expansion 
\begin{align}\label{eq:main-approx}
S_{0,1}\bigl(\mathbb{F}^\delta\bigr) & =E_0 +
 \Bigl(E_0 + \gamma E_{1,1}\Bigr) \sqrt{\delta} +\Bigl(E_0 + \gamma^k E_{2,1} + \kappa(\theta-c)E_{2,2}\Bigr)\delta  \nonumber \biggr.\\
&\phantom{==} + \biggl. \Bigl(E_0 + \gamma^k E_{3,1} + \bigl((\theta-c)\kappa\bigr)^k E_{3,2} + \gamma \kappa E_{3,3} + \gamma^{k-1}(\theta-c)\kappa E_{3,4} +\gamma\bigl((\theta-c)\kappa\bigr)^{k-1}E_{3,5}\Bigr)\sqrt{\delta}^3  \biggr.\nonumber \\
&\phantom{==} \biggl.  +  \mathcal{O}\bigl(\delta^2\bigr).
\end{align}
where the $E$-terms are random variables that do not depend on the model parameters $\gamma$, $\kappa$ and $\theta$ (but on $c$, the spot parameters and the maturities of the futures contracts and the interest rate).
\end{theorem}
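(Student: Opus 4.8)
The plan is to start from the exact factorization of Proposition \ref{prop:GS}, which already reduces every term of $S_{0,1}\bigl(\mathbb{F}^\delta\bigr)$ to a finite combination of iterated Stratonovich integrals of the two-dimensional path $(X^\delta,C^\delta)$ weighted by products of the scalars $B^\delta(T_{i_l})$. Fixing a word $(i_1,\ldots,i_k)$, I would substitute into
\[
S^{i_1,\ldots,i_k}_{0,1}\bigl(\mathbb{F}^\delta\bigr) = \sum_{(j_1,\ldots,j_k)\in\{1,2\}^k}\prod_{l=1}^k B^\delta(T_{i_l})^{j_l-1}\, S^{j_1,\ldots,j_k}_{0,1}\bigl((X^\delta,C^\delta)\bigr)
\]
the three power series in $\sqrt\delta$ that the preceding discussion supplies: the scalar expansion of $B^\delta(T_k)$ from \eqref{eq:GS-exp}, and the process expansions $X^\delta=\sum_m\sqrt\delta^{\,m}\hat X^{(m)}$ and $C^\delta=\sum_m\sqrt\delta^{\,m}\hat C^{(m)}$ of Remark \ref{rem:const} and \eqref{eq:Xn-rep}. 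Since an iterated Stratonovich integral is multilinear in its integrators, expanding $\circ\,dX^\delta$ and $\circ\,dC^\delta$ termwise and distributing turns each $S^{j_1,\ldots,j_k}_{0,1}\bigl((X^\delta,C^\delta)\bigr)$ into a sum over multi-indices $(m_1,\ldots,m_k)\in\N_0^k$ of iterated integrals whose $l$-th integrator is $\circ\,d\hat X^{(m_l)}$ (if $j_l=1$) or $\circ\,d\hat C^{(m_l)}$ (if $j_l=2$), each carrying the factor $\sqrt\delta^{\,\sum_l m_l}$. Combining the $B$-factors then produces a single formal series in $\sqrt\delta$.

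The key structural observation, which produces the claimed $E$-terms, is that the model parameters $\gamma,\kappa,\theta$ enter each building block only through a scalar prefactor. From the formulas in the text, $\hat C^{(0)}=c$ and $\hat X^{(0)}$ are free of $\gamma,\kappa,\theta$, while for $m\ge 1$ each $\hat C^{(m)}$ and $\hat X^{(m)}$ is a fixed parameter-free random function times a monomial $\gamma\,\kappa^{\,b}$ (odd $m$) or $(c-\theta)\,\kappa^{\,b}$ (even $m$); likewise the order-$j$ coefficient of $B^\delta(T_k)$ carries a factor $\kappa^{\,j}$. By multilinearity these scalars pull out of the iterated integrals, so that after collecting all contributions of a given total $\sqrt\delta$-power the coefficient is a sum of monomials in $(\gamma,\kappa,\theta-c)$ — the higher powers such as $\gamma^{\,k}$ arising from products of several odd-order blocks across the $k$ integrators — times iterated integrals that involve only the parameter-free parts of the $\hat X^{(m)},\hat C^{(m)}$ and only $c$, the spot parameters, $r$ and the maturities. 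These parameter-free iterated integrals are exactly the $E_{n,\cdot}$. Reading off the lowest orders is then direct: at order $\delta^0$ every integration against $C$ dies because $d\hat C^{(0)}=0$, leaving only the all-$X$ words built from $\hat X^{(0)}$, i.e.\ the signature of $\hat X^{(0)}$, which is $E_0$; at each subsequent half-integer order one sorts the surviving multi-indices and $B$-orders by their monomial in $(\gamma,\kappa,\theta-c)$, recovering the grouping displayed in \eqref{eq:main-approx}.

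The genuine work, and the main obstacle, is not this bookkeeping but the rigorous control of the remainder, i.e.\ justifying the $\mathcal{O}(\delta^2)$ error for the true signature rather than for a formal series. Two approximations must be controlled at once: replacing $(X^\delta,C^\delta)$ and $B^\delta$ by their truncations $(X^{(n)},C^{(n)})$ and $B^{(\lfloor n/2\rfloor)}$, and truncating the resulting $\sqrt\delta$-series. For this I would invoke the accuracy estimates for $X^{(n)},C^{(n)},B^{(n)}$ of Proposition \ref{prop:GS-approx} together with a quantitative stability estimate for the iterated-integral map: a bound on the difference of two signatures in terms of the difference of their driving paths, measured in the weighted signature norm of Section \ref{sec:approx}. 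Concretely, one estimates $S_{0,1}\bigl((X^\delta,C^\delta)\bigr)-S_{0,1}\bigl((X^{(n)},C^{(n)})\bigr)$ by the size of the increments $X^\delta-X^{(n)}$ and $C^\delta-C^{(n)}$, which are of order $\sqrt\delta^{\,n+1}$ by Proposition \ref{prop:GS-approx}, and separately bounds the discarded $B$-tail; taking $n=3$ and tracking exponents shows every discarded term is of order $\delta^2$ or higher. The delicate point is that the signature is an infinite vector, so the stability estimate and the factorial decay of the iterated-integral bounds must be \emph{uniform across words of all lengths} for the remainder to be summable; this uniformity is precisely what the weighted signature norm is designed to deliver, and it is where the argument stands or falls.
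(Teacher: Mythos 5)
Your proposal follows essentially the same route as the paper: the bookkeeping part (substituting the $\sqrt{\delta}$-expansions of $B^\delta$, $X^\delta$, $C^\delta$ into the factorization of Proposition \ref{prop:GS}, pulling the $\gamma,\kappa,(\theta-c)$-monomials out of the iterated integrals by multilinearity, and collecting the parameter-free integrals into the $E$-terms) is exactly the paper's proof of Corollary \ref{cor:prac-approx}, while your remainder-control plan --- Proposition \ref{prop:GS-approx} combined with a word-length-uniform stability estimate for iterated integrals, summed in the weighted signature norm with $n=3$ --- is precisely what the paper carries out via Lemma \ref{lem:BIGlem}, Lemma \ref{lem:it-int-approx} and Theorem \ref{thm:main}. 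You also correctly locate the crux (uniformity of the estimates across all word lengths, which the weights $w_m$ are engineered to absorb), so the plan is sound and matches the paper's argument.
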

Of course this formula and in particular the meaning of the Big-Oh notation will have to be clarified in detail, as the setting is stochastic and signature and the $E$-terms are random variables, but we postpone this to the next section \ref{sec:approx}. For now we focus on the implication of this approximation.

\subsection{Interpretation of the Results}

The zeroth order term depends solely on the spot price, spot volatility and the initial level $c$ of the convenience yield (thus we are in the in the case of constant convenience yield discussed in Remark \ref{rem:const} and have that result as the baseline of our expansion). The  first order term (i.e., the term of order $\sqrt{\delta}$) depends additionally on the volatility $\gamma$ of the convenience yield (through the representation of $C^{(1)}$), the reversion speed $\kappa$ and mean reversion level $\theta$ only appear in the second order term (i.e., of order $\delta$), and they appear only jointly in the form $\kappa(\theta - c)$. A possibility to separate their effect comes only with the third order term (i.e., the order $\sqrt{\delta}^3$) where $\kappa$ appears isolated in contributing terms. None of the terms depends on storage costs (as they were assumed constant). Overall, the strongest effect of the model parameters (besides the initial convenience yield) has by $\gamma$, the volatility of the convenience yield. Thus if we employ a classification algorithm on the feature set given by tthe (truncated) signature vecor, it is most likely to pick up on differences in the volatility of the convenience yield; this is what drives the superior performance of the signature-based method in discerning storable from non-storable commodities..

This is in line with observation of empirical work that studies the volatility of the convenience yield, notably Prokopczuk, L. Symeonidis, C. Wese Simen, R. Wichmann \cite[Table 1]{PSSW23} and Koijen, Moskowitz, Pedersen, and Vrugt \cite[Table 1]{KMPV18} who study the closely related concept of carry (which coincides for commodities essentially with the convenience yield up to a factor). Both papers show that there is a huge variety in the standard deviation of the convenience yield with very storable commodities as metals have the smallest and very non-storable commodities as natural gas the highest (typically more than ten times higher). There is some difference between the two papers, not only that they track slightly different quantities (besides the difference between carry and convenience yield, \cite{PSSW23} focus on the convenience yield between the first two futures contracts while \cite{KMPV18} considers that from spot to first futures) but they use time series of different length for estimation. This might contribute to a good part to the difference, as the convenience yield volatility is highly heteroscedastic (see \cite{LT11} for a general analysis and \cite{CWZ06} for the example of the natural gas market). Additionally, the recent financialization of commodity markets also affected the structure of the convenience yield, e.g., \cite{MPh24} show that it became considerably negative correlated with the VIX.

We can also illustrate the results by plotting the path of the first two futures prices for different commodities, see Figure \ref{fig:paths}. We see that for a highly storable commodity, the futures prices exhibit some volatility, but the different contracts move in lock-step, creating nearly a $45\degree$ line. Natural gas, on the other hand, does not only exhibit a higher volatility, but shows significant differences in one-day return movements between the different contracts. Softs as a class of commodities show the widest degree of inter-class variability. The figure shows the difference between cocoa (volatile, but very lock-step behavior, the lowest standard deviation of any convenience yield of softs in \cite{PSSW23} and \cite{KMPV18}) and cotton (idiosyncratic moves of the different contracts, the highest standard deviation of convenience yield of any soft in \cite{PSSW23} and \cite{KMPV18}). This also explains why the signature based classification algorithm performs not as well when asked to classify \textit{softs vs. grains}: they do not differ clearly on the standard deviation of the convenience yield, as \textit{grains vs. metals} or \textit{storables vs. non-storables} do.

\begin{figure}[htb]
\centering
	\begin{tikzpicture}[scale=0.55]
		\node[black] at (5.5,6) {\textbf{July 2023}};
\begin{scope}[scale = 0.2]
	\draw[gray!80,->] (-1,0) -- (21,0) node at (23,0) {\footnotesize $x$};
	\draw[gray!80,->] (0,-1) -- (0,21) node at (0,23) {\footnotesize $y$};
	\draw[gray!80,] (20,-0.2) -- (20,0.2) node at (20.3,-2.2) {\footnotesize$6\%$};
	\draw[gray!80,] (-0.2,20) -- (0.2,20) node at (-2.4,20.2) {\footnotesize $6\%$};
	\node[copper] at (15,22) {\small \textbf{Copper}};
	\begin{scope}[scale=0.33]
		\draw[copper] (-10.61,-10.04) -- (-7.24,-7.11) -- (13.29,13.11) -- (-0.57,-0.12) -- (-6.26,-6.10) -- (21.45, 21.18) --
(22.47, 22.63) -- (-3.09,-2.62) -- (-23.06, -21.89) -- (-0.24, -1.89) -- (-4.95, -5.01) -- (6.72, 6.74) -- (-4.41, -4.20) -- (7.64, 7.63) -- (17.95, 17.99) -- (-7.12, -6.82) -- (-5.99, -5.81) -- (11.13, 11.06) -- (19.42, 19.39);
\end{scope}
\end{scope}
\begin{scope}[scale = 0.2, yshift =-40cm]
	\draw[gray!80,->] (-1,0) -- (21,0) node at (23,0) {\footnotesize $x$};
	\draw[gray!80,->] (0,-1) -- (0,21) node at (0,23) {\footnotesize $y$};
	\draw[gray!80,] (20,-0.2) -- (20,0.2) node at (20.3,-2.2) {\footnotesize$6\%$};
	\draw[gray!80,] (-0.2,20) -- (0.2,20) node at (-2.4,20.2) {\footnotesize $6\%$};
	\node[OIblack] at (15,22) {\textbf{Cocoa}};
	\begin{scope}[scale=0.33]
		\draw[OIblack] (-9.00, -8.90) -- (-17.39, -16.90) -- (-0.92,-0.91) -- (7.58, 7.50) -- (-0.91,-0.90) -- (0.91, 2.10) -- (-20.73, -22.04) -- (22.68, 25.36) -- (15.48, 2.08) -- (14.09, 11.77) -- (7.57, 5.56) -- (-13.28, -10.35) -- (7.91, 6.17) -- (0.59, 2.34) -- (15.27, 15.86) -- (23.09, 21.73) -- (-6.52, -5.68) -- (-1.13, -0.85) -- (6.48, 6.21);
\end{scope}
\end{scope}
\begin{scope}[scale = 0.2,  xshift =35cm, yshift =-40cm]
	\draw[gray!80,->] (-1,0) -- (21,0) node at (23,0) {\footnotesize $x$};
	\draw[gray!80,->] (0,-1) -- (0,21) node at (0,23) {\footnotesize $y$};
	\draw[gray!80,] (20,-0.2) -- (20,0.2) node at (20.3,-2.2) {\footnotesize$6\%$};
	\draw[gray!80,] (-0.2,20) -- (0.2,20) node at (-2.4,20.2) {\footnotesize $6\%$};
	\node[OIorange] at (15,22) {\textbf{Cotton}};
	\begin{scope}[scale=0.33]
		\draw[OIorange] (-11.89, -12.01) -- (5.19, 18.65) -- (-20.69, -22.38) -- (-15.63, -25.99) -- (36.17, 34.95) -- (-5.33, -5.76) -- (3.38, 0.37) -- (-1.57, -5.66) -- (9.94, 11.08) -- (3.46, 1.46) -- (21.49, 17.91) -- (3.61, 6.64) -- (-9.99, 2.01) -- (5.84, 7.98) -- (33.54, 21.94) -- (15.89, 9.78) -- (-38.79, -42.07) -- (-18.34, -1.42) -- (4.45, 5.43);
\end{scope}
\end{scope}
\begin{scope}[scale = 0.2, xshift =35cm]
	\draw[gray!80,->] (-1,0) -- (21,0) node at (23,0) {\footnotesize $x$};
	\draw[gray!80,->] (0,-1) -- (0,21) node at (0,23) {\footnotesize $y$};
	\draw[gray!80,] (20,-0.2) -- (20,0.2) node at (20.3,-2.2) {\footnotesize$6\%$};
	\draw[gray!80,] (-0.2,20) -- (0.2,20) node at (-2.4,20.2) {\footnotesize $6\%$};
	\node[OIbluelight] at (15,22) {\small \textbf{Natural Gas}};
	\begin{scope}[scale=0.33]
	\draw[OIbluelight] (-19.57, -22.32) -- (-18.40, -16.93) -- (-10.46, -11.28) -- (32.60, 26.88) -- (22.70, 18.95) -- (-37.61, -36.58) -- (-34.18, -24.86) -- (-2.36, -1.58) -- (-10.75, -10.38) -- (44.50, 43.54) -- (-9.99, -12.77) -- (55.86, 53.46) -- (-16.22, -8.87) -- (-10.43, -7.07) -- (16.48, 21.48) -- (-24.39, -20.05) -- (-69.42, -37.76) -- (55.34, 57.39) -- (-1.52, -4.38);
\end{scope}
\end{scope}
	\begin{scope}[xshift=15cm]
	\node[black] at (5.5,6) {\textbf{October 2023}};
\begin{scope}[scale = 0.2]
	\draw[gray!80,->] (-1,0) -- (21,0) node at (23,0) {\footnotesize $x$};
	\draw[gray!80,->] (0,-1) -- (0,21) node at (0,23) {\footnotesize $y$};
	\draw[gray!80,] (20,-0.2) -- (20,0.2) node at (20.3,-2.2) {\footnotesize$6\%$};
	\draw[gray!80,] (-0.2,20) -- (0.2,20) node at (-2.4,20.2) {\footnotesize $6\%$};
	\node[copper] at (15,22) {\small \textbf{Copper}};
	\begin{scope}[scale=0.33]
		\draw[copper] (-6.35,-6.14) -- (-7.22,-7.10) -- (-7.40,-7.09) -- (18.50,18.44) -- (8.14,7.98) -- (-12.04,-11.42) -- (-0.13,0.13) -- (-4.39,-4.05) -- (-5.30,-5.28) -- (4.80,3.90) -- (2.77,1.13) -- (0.31,0.44) -- (2.89,2.82) -- (-5.84,-5.89) -- (2.39,2.64) -- (10.32,10.33) -- (-3.44,-2.90) -- (-6.45,-5.99) -- (14.49,14.32) -- (4.63,4.77) -- (-5.28,-4.67);
\end{scope}
\end{scope}
\begin{scope}[scale = 0.2, yshift =-40cm]
	\draw[gray!80,->] (-1,0) -- (21,0) node at (23,0) {\footnotesize $x$};
	\draw[gray!80,->] (0,-1) -- (0,21) node at (0,23) {\footnotesize $y$};
	\draw[gray!80,] (20,-0.2) -- (20,0.2) node at (20.3,-2.2) {\footnotesize$6\%$};
	\draw[gray!80,] (-0.2,20) -- (0.2,20) node at (-2.4,20.2) {\footnotesize $6\%$};
	\node[OIblack] at (15,22) {\small \textbf{Cocoa}};
	\begin{scope}[scale=0.33]
		\draw[OIblack] (-21.56,-20.03) -- (0.87,1.45) -- (-9.12,-7.90) -- (15.35,13.83) -- (-5.24,-4.63) -- (11.23,11.45) -- (-12.23,-10.41) -- (16.62,16.50) -- (2.57,2.55) -- (25.62,24.90) -- (12.10,12.56) -- (1.10,0.82) -- (2.47,3.00) -- (12.45,12.10) -- (17.29,17.69) -- (-8.05,-6.11) -- (1.61,2.12) -- (19.69,18.73) -- (10.90,9.28) -- (-6.80,-5.96) -- (-2.36,-2.60);
\end{scope}
\end{scope}
\begin{scope}[scale = 0.2, xshift =35cm, yshift =-40cm]
	\draw[gray!80,->] (-1,0) -- (21,0) node at (23,0) {\footnotesize $x$};
	\draw[gray!80,->] (0,-1) -- (0,21) node at (0,23) {\footnotesize $y$};
	\draw[gray!80,] (20,-0.2) -- (20,0.2) node at (20.3,-2.2) {\footnotesize$6\%$};
	\draw[gray!80,] (-0.2,20) -- (0.2,20) node at (-2.4,20.2) {\footnotesize $6\%$};
	\node[OIorange] at (15,22) {\small \textbf{Cotton}};
	\begin{scope}[scale=0.33]
		\draw[OIorange] (-3.67,-3.66) -- (-4.84,-4.83) -- (-5.45,-5.43) -- (6.91,6.89) -- (-2.08,-2.07) -- (-14.28,3.44) -- (-4.70,-6.58) -- (-1.53,-0.35) -- (13.25,12.65) -- (-10.69,-8.27) -- (-23.07,-18.96) -- (11.99,9.51) -- (0.36,-0.81) -- (-22.69,-19.52) -- (15.88,10.88) -- (-9.65,-9.21) -- (10.50,9.71) -- (9.22,7.08) -- (-2.49,0.12) -- (-17.73,-14.73) -- (-20.81,-16.41);
\end{scope}
\end{scope}
\begin{scope}[scale = 0.2, xshift =35cm]
	\draw[gray!80,->] (-1,0) -- (21,0) node at (23,0) {\footnotesize $x$};
	\draw[gray!80,->] (0,-1) -- (0,21) node at (0,23) {\footnotesize $y$};
	\draw[gray!80,] (20,-0.2) -- (20,0.2) node at (20.3,-2.2) {\footnotesize$6\%$};
	\draw[gray!80,] (-0.2,20) -- (0.2,20) node at (-2.4,20.2) {\footnotesize $6\%$};
	\node[OIbluelight] at (15,22) {\small \textbf{Natural Gas}};
	\draw[OIbluelight] (12.32,8.05) -- (1.46,0.70) -- (21.48,18.13) -- (17.18,11.90) -- (3.75,-0.92) -- (0.59,2.55) -- (-0.49,-0.09) -- (-3.29,-1.74) -- (-11.12,-5.12) -- (-13.62,-10.06) -- (-3.25,-1.06) -- (-2.51,-0.48) -- (-11.16,-13.73) -- (-6.67,-6.96) -- (3.08,1.53) -- (5.05,4.92) -- (4.32,5.33) -- (21.16,9.68) -- (-5.27,0.57) -- (18.70,9.84) -- (20.79,19.58);
\end{scope}
\end{scope}
\end{tikzpicture}
\caption{Paths of the first two future returns (front contract on $x$-axis, next contract on $y$) for different commodities in two representative months.}\label{fig:paths}
\end{figure}
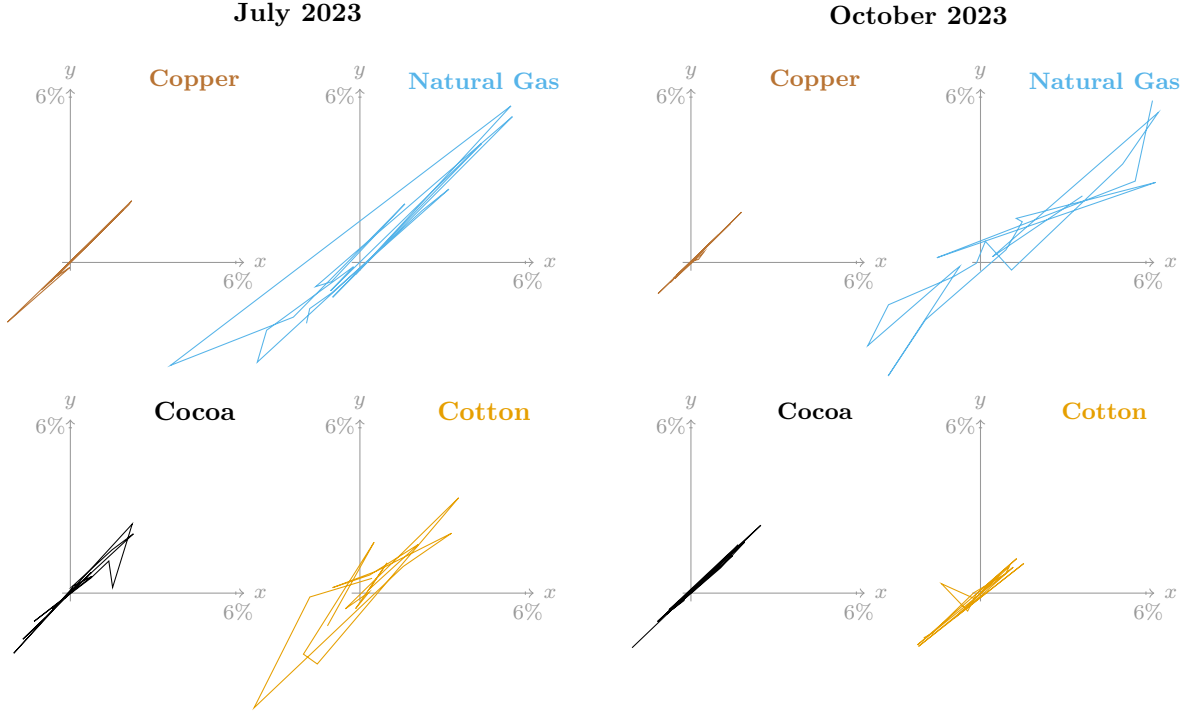

\section{Signature Approximations}\label{sec:approx}

We will now provide a rigorous argument for the expansion presented in Theorem \ref{thm:exp-first}.  To make things precise, we have to clarify the notion of distance with which we operate. As there is no obvious notion of intrinsic distance of signatures (notably, signatures do not form a vector space), we take the view of signatures as  sequences of random variables and will consider the distance induced from a norm on the ambient space. Specially, we endow the space of random sequences with a suitable family of norms, namely $L^p$-norms, properly weighted along the sequence.

\begin{definition}
We define the \emph{weighted $d$-signature $L^p$-norm} $\Vert \, \cdot \,  \Vert_{\sig(p,d)}$ on the space of sequences $\mathbb{Y} = (Y_0, Y_1, Y_2, \ldots)$ of random variables, $L^0(\Omega, \mathcal{F}, \mathbb{P}; \mathbb{R}^\mathbb{N})$, by setting
\[
\bigl\Vert \mathbb{Y} \bigr\Vert_{\sig(p,d,w)} := \bigl\Vert Y_0 \bigr\Vert_p + \sum_{m=1}^\infty \sum_{k=0}^{d^m-1} \frac{w_m}{m! d^m} \bigl\Vert Y_{d^{m-1} +k}\bigr\Vert_p.
\]
for weights $w \in \ell^1_{\geq 0}$, $\sum_{m=0}^\infty w_m = 1$ and $p \in [1,\infty)$, where $\Vert \, \cdot \, \Vert_p$ is the classical $L^p$-norm for ($\P$-null equivalence classes of) random variables.
\end{definition}
We note also that we have the Hardy-norms for continuous $(\mathbb{G}, \mathbb{P})$-semimartingales on $[0,1]$. Namely, let $Y$ be a continuous semimartingale, then it has a decomposition as $Y = Y_0 + M + A$, where $M$ is a local martingale and $A$ a process of finite total variation, $M_0 = A_0 =0$, and $Y_0$ is a constant. We then define
\[
\bigl\Vert Y \bigr\Vert_{\mathcal{H}^p} = \bigl\vert Y_0 \bigr\vert +\E\Bigl[ \bigl\langle M \bigr\rangle_1^\frac{p}{2} \Bigr]^\frac{1}{p} + \E\biggl[ \Bigl\vert \int_0^1 \bigl\vert dA_t \bigr\vert \Bigr\vert^p\biggr]^\frac{1}{p},
\] 
and denote the space of continuous $(\mathbb{G}, \mathbb{P})$-semimartingales with finite $\mathcal{H}^p$-norm by $\mathcal{H}^p$ while setting $\mathcal{H}^\infty := \bigcap_{p \geq 1} \mathcal{H}^p$.
\begin{proposition}\label{prop:GS-approx}
The approximations for $X^{(n)}$ for the spot returns and $C^{(n)}$ for the convenience yield satisfy for $\delta \leq 1$ the inequalities
\begin{equation}\label{eq:spot-approx}
\bigl\Vert X^\delta - X^{(n)} \bigr\Vert_{\mathcal{H}^p} \leq \bigl\Vert C^\delta - C^{(n)} \bigr\Vert_{\mathcal{H}^p} \leq \sqrt{\delta}^{n+1} \tilde{K}_1 \frac{\kappa^{n+1}}{\Gamma(\frac{n}{2}+1)} \leq \sqrt{\delta}^{n+1} K_1 
\end{equation} 
for some universal constants $K_1, \, \tilde{K}_1 >0$ independent of $p$ and $n$.
\end{proposition}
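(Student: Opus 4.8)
The plan is to reduce the spot estimate to the convenience-yield estimate and then to control $C^\delta - C^{(n)}$ by splitting it into the Taylor remainders of the deterministic and the stochastic parts of the Ornstein--Uhlenbeck closed form recorded in Remark \ref{rem:const}.

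First I would establish the left inequality. By the representation \eqref{eq:Xn-rep} the drift constants and the Brownian term $\sigma W^1$ cancel in the difference, leaving
\[
X^\delta_t - X^{(n)}_t = -\int_0^t \bigl(C^\delta_s - C^{(n)}_s\bigr)\,ds,
\]
a purely finite-variation process vanishing at $0$. Hence $\Vert X^\delta - X^{(n)}\Vert_{\mathcal{H}^p}$ equals the $L^p$-norm of $\int_0^1 \bigl|C^\delta_s - C^{(n)}_s\bigr|\,ds$, and since a finite-variation path started at $0$ is dominated pathwise by its running variation, this is at most $\int_0^1 \bigl|d(C^\delta - C^{(n)})_u\bigr|$. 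The key structural observation is that the martingale parts of $C^\delta$ and of $C^{(n)}$ coincide: the former is $\sqrt{\delta}\gamma W^2$ by the defining SDE, while the only martingale-bearing term of $C^{(n)}$ is $\sqrt{\delta}\hat{C}^{(1)} = \sqrt{\delta}\gamma W^2$, every higher stochastic correction $\hat{C}^{(2k+1)}$ with $k\geq 1$ being, after stochastic Fubini, of finite variation. Thus for $n\geq 1$ the difference $C^\delta - C^{(n)}$ is itself of finite variation and the last bound is exactly $\Vert C^\delta - C^{(n)}\Vert_{\mathcal{H}^p}$, which gives the left inequality.

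It remains to bound $\Vert C^\delta - C^{(n)}\Vert_{\mathcal{H}^p}$. Writing $\rho_j(x) = e^{-x} - \sum_{i=0}^j (-x)^i/i!$ for the order-$j$ Taylor remainder of the exponential, the construction of $C^{(n)}$ as the partial sum in $\sqrt{\delta}$ yields $C^\delta - C^{(n)} = R^{\mathrm{det}} + R^{\mathrm{sto}}$ with
\[
R^{\mathrm{det}}_t = (c-\theta)\,\rho_M(\delta\kappa t), \qquad R^{\mathrm{sto}}_t = \sqrt{\delta}\,\gamma \int_0^t \rho_K\bigl(\delta\kappa(t-s)\bigr)\,dW^2_s,
\]
where $M = \lfloor n/2\rfloor$ and $K = \lfloor (n-1)/2\rfloor$. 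Both are absolutely continuous in $t$, so $\int_0^1 \bigl|d(C^\delta - C^{(n)})\bigr| \leq \int_0^1 |\dot{R}^{\mathrm{det}}_t|\,dt + \int_0^1 |\dot{R}^{\mathrm{sto}}_t|\,dt$. For the deterministic term one uses $|\rho_j(x)|\leq x^{j+1}/(j+1)!$ for $x\geq 0$ together with $\rho_j' = -\rho_{j-1}$, so differentiation only lowers the order; integrating over $[0,1]$ and using $\delta\leq 1$ bounds this contribution by a constant times $|c-\theta|\,\delta^{M+1}\kappa^{M+1}/M!$, which is at most $\sqrt{\delta}^{\,n+1}\kappa^{n+1}/\Gamma(n/2+1)$ up to constants. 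For the stochastic term, $\rho_K' = -\rho_{K-1}$ and stochastic Fubini give $\dot{R}^{\mathrm{sto}}_u = -\delta^{3/2}\kappa\gamma \int_0^u \rho_{K-1}\bigl(\delta\kappa(u-s)\bigr)\,dW^2_s$, a centered Gaussian whose deterministic variance is controlled by $\Vert\rho_{K-1}\Vert^2$; Minkowski's integral inequality and the Gaussian moment identity $\Vert N(0,\sigma^2)\Vert_p = c_p\,\sigma$ then produce the same envelope after collecting the powers of $\delta$ and the factorial $\bigl((K+1)!\bigr)^{-1}$ and bounding $\kappa^{K+1}\leq \kappa^{n+1}$. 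The final inequality follows from $\delta\leq 1$ and $\sup_{n} \kappa^{n+1}/\Gamma(n/2+1) < \infty$.

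The main obstacle is the claimed uniformity of the constants, in particular in $p$. Uniformity in $n$ is structural: both remainder families are indexed by $M,K\approx n/2$, which is exactly why the factorial that appears is $\Gamma(n/2+1)$ rather than $n!$, and $\kappa^{n+1}/\Gamma(n/2+1)$ is bounded in $n$. The genuinely delicate point is that, for $n\geq 1$, the martingale parts of $C^\delta$ and $C^{(n)}$ cancel, so the remainder is of finite variation and its $\mathcal{H}^p$-norm is the $L^p$-norm of the random total variation $\int_0^1 |\dot{R}^{\mathrm{det}}_t + \dot{R}^{\mathrm{sto}}_t|\,dt$, into which the Gaussian moment constant $c_p = \Vert N(0,1)\Vert_p$ enters through $\dot{R}^{\mathrm{sto}}$. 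Since the Brownian integrands here are deterministic, the natural route to a $p$-free bound is to keep the randomness inside the martingale $\sqrt{\delta}\gamma\int_0^{\cdot} \rho_{K-1}(\delta\kappa(\cdot-s))\,dW^2_s$ before passing to its finite-variation primitive, exploiting that a Wiener integral has deterministic quadratic variation and hence a $p$-free term $\E[\langle M\rangle_1^{p/2}]^{1/p} = \langle M\rangle_1^{1/2}$ in the $\mathcal{H}^p$-norm. Reconciling this with the finite-variation form forced by the canonical semimartingale decomposition, and thereby verifying that the constant can indeed be taken independent of $p$, is the step that demands the most care.
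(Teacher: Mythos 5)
Your strategy coincides with the paper's own proof in its main lines: the spot estimate is reduced to the convenience-yield estimate exactly as the paper does it (via \eqref{eq:Xn-rep}, the difference $X^\delta - X^{(n)} = -\int_0^\cdot (C^\delta_s - C^{(n)}_s)\,ds$ is of finite variation, its total variation is dominated by $\sup_t |C^\delta_t - C^{(n)}_t|$, which in turn is dominated by the total variation of $C^\delta - C^{(n)}$), and the bound on $\Vert C^\delta - C^{(n)}\Vert_{\mathcal{H}^p}$ comes from the same splitting into the deterministic and stochastic Taylor remainders of the Ornstein--Uhlenbeck closed form, with the same mechanism (remainder orders $M, K \approx n/2$) producing the $\Gamma(\frac{n}{2}+1)$ and the $n$-uniform constant. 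Your structural observation that for $n \geq 1$ the martingale parts of $C^\delta$ and $C^{(n)}$ cancel (all kernels involved vanish on the diagonal) is correct, and it is in fact what is needed to justify the step $\E[\sup_t|C^\delta_t - C^{(n)}_t|^p]^{1/p} \leq \Vert C^\delta - C^{(n)}\Vert_{\mathcal{H}^p}$, which the paper uses without comment.

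The gap is the one you flag yourself and do not close: uniformity of the constant in $p$. Your treatment of the stochastic remainder goes through the canonical decomposition --- all of $R^{\mathrm{sto}}$ sits in the finite-variation part, with density $\dot R^{\mathrm{sto}}_u = -\delta^{3/2}\kappa\gamma\int_0^u \rho_{K-1}(\delta\kappa(u-s))\,dW^2_s$ --- and Minkowski plus the Gaussian moment identity then give a bound proportional to $c_p = \Vert N(0,1)\Vert_p \sim \sqrt{p}$. This proves \eqref{eq:spot-approx} only with an extra factor $\sqrt{p}$, not with constants independent of $p$. Worse, the ``reconciliation'' you hope for in your last paragraph cannot be carried out under the paper's definition of $\mathcal{H}^p$: the canonical decomposition is unique, the martingale part of $C^\delta - C^{(n)}$ is zero for $n\geq 1$, so the Hardy norm \emph{is} the $L^p$-norm of the random total variation; since the total variation dominates $|C^\delta_1 - C^{(n)}_1|$ pathwise, and this terminal value is a shifted nondegenerate Gaussian (a Wiener integral of the deterministic kernel remainder), the left-hand side genuinely grows like $\sqrt{p}$, so no $p$-free bound exists under that reading. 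The paper reaches its $p$-free constant by a move your more careful route exposes as the crux: in its display for $\Vert C^\delta - C^{(n)}\Vert_{\mathcal{H}^p}$ it scores the Volterra remainder $\gamma\sqrt{\delta}\int_0^\cdot \rho_K(\delta\kappa(\cdot-s))\,dW^2_s$ as if it were a martingale with deterministic bracket $\int_0^1 \rho_K(\delta\kappa(1-s))^2\,ds$ (the expectation appearing there is of a deterministic quantity, hence $p$-free), which is not the contribution prescribed by the canonical decomposition. So your argument, completed honestly, yields the estimate with $K_1$ replaced by $\sqrt{p}\,K_1$ --- which is all that is needed downstream, since Theorem \ref{thm:main} carries a $\sqrt{p}$ anyway --- but neither your route nor, strictly speaking, the paper's own establishes the $p$-independent constant as stated.
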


\begin{proof}
We start with the convenience yield. From the formal solution of the Ornstein--Uhlenbeck process we have
\[
C^\delta_t = c e^{-\delta \kappa t} + \theta(1-e^{- \delta\kappa  t}) + \sqrt{\delta} \gamma \int_0^t e^{-\delta \kappa(t-s)} \, \circ dW^2_s.
\]
and for the $n$-th order approximation.
\begin{align*}
C_t^{(n)} & = c + (c-\theta) \sum_{k=1}^{\lfloor \frac{n}{2}\rfloor} \delta^{k}\frac{(-\kappa)^{k}}{k!} t^{k} + \gamma \sqrt{\delta} \sum_{k=0}^{\lfloor \frac{n+1}{2}\rfloor-1} \delta^{k} \frac{(-\kappa)^{k}}{k!} \int_0^t (t-s)^{k} \, \circ dW^2_s.
\end{align*}
Thus, using the estimate $\sum_{k=0}^{2n+1}\frac{(-x)^k}{k!} \leq e^{-x} \leq \sum_{k=0}^{2n}\frac{(-x)^k}{k!}$, we conclude that
\begin{align*}
& \phantom{=::}  \bigl\Vert C^\delta - C^{(n)} \bigr\Vert_{\mathcal{H}^p}\\
& = \Biggl\Vert (c-\theta) \biggl(e^{- \delta\kappa  \cdot } - \sum_{k=0}^{\lfloor \frac{n}{2}\rfloor} \delta^{k}\frac{(-\kappa)^{k}}{k!} \cdot^{k} \biggr)  + \gamma \sqrt{\delta}  \int_0^\cdot  \biggl(e^{-\delta \kappa(\cdot -s)} -  \sum_{k=0}^{\lfloor \frac{n+1}{2}\rfloor -1} \delta^{k} \frac{(-\kappa)^{k}}{k!} (\cdot -s)^{k}\biggr) \, \circ dW^2_s \Biggr\Vert_{\mathcal{H}^p}\\
& = \bigl\vert c-\theta  \bigr \vert \cdot \Biggl\vert e^{- \delta\kappa } - \sum_{k=0}^{\lfloor \frac{n}{2}\rfloor} \delta^{k}\frac{(-\kappa)^{k}}{k!} \biggr\vert +\gamma \sqrt{\delta}  \E \Biggl[\Biggl( \int_0^1 \biggl(e^{-\delta \kappa(1-s)} -  \sum_{k=0}^{\lfloor \frac{n+1}{2}\rfloor -1} \delta^{k} \frac{(-\kappa)^{k}}{k!} (1-s)^{k}\biggr)^{2} \, ds\Biggr)^\frac{p}{2}\Biggr]^\frac{1}{p}\\
& \leq \bigl\vert c-\theta  \bigr\vert \cdot \delta^{(\lfloor \frac{n}{2}\rfloor+1)}\frac{\kappa^{(\lfloor \frac{n}{2}\rfloor+1)}}{\bigl((\lfloor \frac{n}{2}\rfloor+ 1)!)}  + \gamma \sqrt{\delta}  \Biggl( \int_0^1  \delta^{2\lfloor \frac{n+1}{2}\rfloor} \frac{(-\kappa)^{2\lfloor \frac{n+1}{2}\rfloor}}{(\lfloor \frac{n+1}{2}\rfloor !)^2} (1-s)^{2\lfloor \frac{n+1}{2}\rfloor} \, ds\Biggr)^\frac{1}{2}\\
& \leq \bigl\vert c-\theta  \bigr \vert \cdot \delta^{(\lfloor \frac{n}{2}\rfloor+1)}\frac{\kappa^{(\lfloor \frac{n}{2}\rfloor+1)}}{\bigl((\lfloor \frac{n}{2}\rfloor+ 1)!\bigr)}   + \gamma  \sqrt{\delta} \frac{\delta^{\lfloor \frac{n+1}{2}\rfloor } \kappa^{\lfloor \frac{n+1}{2}\rfloor}}{(\lfloor \frac{n+1}{2}\rfloor !) \cdot (2\lfloor \frac{n+1}{2}\rfloor +1)}.
\end{align*}
It follows for $\delta \leq 1$ that
\[
\bigl\Vert C^\delta - C^{(n)} \bigr\Vert_{\mathcal{H}^p} \leq \sqrt{\delta}^{n+1} \frac{\sqrt{\kappa}^{n+1}}{\Gamma(\frac{n}{2}+1)}\tilde{K}_1
\]
for the universal constant $\tilde{K}_1 := ( \kappa \vee 1)\bigl\vert c-\theta\bigr\vert + ( \kappa \wedge 1) \gamma$. To get the bound independent of the order of the approximation, we note that $\max_{z \in [0,\infty)} \frac{a^z}{\Gamma(z+1)} \leq \frac{a^{a+1}}{\Gamma(a+1)}$ for $a>0$ and thus $K_1 := \frac{\kappa^{\kappa+\frac{3}{2}}}{\Gamma(\kappa+1)}\tilde{K}_1$.

Next we consider the the spot returns $X$ that can be written explicitly as
\[
X^\delta_t = x_0 + \int_0^t \Bigl(r - \frac{\sigma^2}{2} - C_s^\delta \Bigr) \, ds + \sigma W^1_t.
\]
Combining this with equation \eqref{eq:Xn-rep} gives
\[
\bigl\Vert X^\delta - X^{(n)} \bigr\Vert_{\mathcal{H}^p}  = \biggl\Vert \int_0^\cdot C^\delta_s - C^{(n)}_s \,ds \biggr\Vert_{\mathcal{H}^p} 
\leq  \E\biggl[\sup_{t \in [0,1]}  \bigl\vert C^\delta_s - C^{(n)}_s \bigr\vert^p  \biggr]^\frac{1}{p}  \leq \bigl\Vert C^\delta - C^{(n)} \bigr\Vert_{\mathcal{H}^p}, \]
implying the result.
\end{proof}

We are now able to formulate and proof the main approximation result.
\begin{theorem}\label{thm:main}
The approximation $S_{0,1}\bigl(\mathbb{F}^{(n)}\bigr)$ of the signature of the term structure of futures log-prices $S_{0,1}\bigl(\mathbb{F}^{\delta}\bigr)$ satisfies for $\delta \leq 1$ that
\begin{equation}\label{eq:sig-approx}
\Bigl\Vert S_{0,1}\bigl(\mathbb{F}^{\delta}\bigr) - S_{0,1}\bigl(\mathbb{F}^{(n)}\bigr) \Bigr\Vert_{\sig(p,2,w)}  \leq \sqrt{\delta}^{n+1} \sqrt{p} K_2.
\end{equation}
for weights $w = (w_m)$, $w_m = \sqrt{2}^{m(1-m)}\bigl(\sum_{m=1}^\infty \sqrt{2}^{m(1-m)}\bigr)^{-1}$, and  some universal constant $K_2>0$ independent of $p$ and $n$.
\end{theorem}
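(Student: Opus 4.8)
The strategy is to lift the pathwise $\mathcal{H}^p$-estimates of Proposition \ref{prop:GS-approx} to the level of signatures, using that by Proposition \ref{prop:GS} the entire term-structure signature is a deterministic, maturity-dependent linear image of the signature of the two-dimensional path $Z^\delta := (X^\delta, C^\delta)$ (respectively $Z^{(n)} := (X^{(n)}, C^{(n)})$ for the approximation). Accordingly I would first split the error into a \emph{path} part and a \emph{coefficient} part. Subtracting \eqref{eq:GS-approx} from \eqref{eq:GS-exp} and adding and subtracting $\prod_l B^\delta(T_{i_l})^{j_l-1} S^{j_1,\ldots,j_m}_{0,1}(Z^{(n)})$ gives
\[
S^{i_1,\ldots,i_m}_{0,1}(\mathbb{F}^\delta) - S^{i_1,\ldots,i_m}_{0,1}(\mathbb{F}^{(n)}) = \sum_{(j_1,\ldots,j_m)\in\{1,2\}^m}\Bigl(\prod_{l=1}^m B^\delta(T_{i_l})^{j_l-1}\Bigr)\bigl(S^{j_1,\ldots,j_m}_{0,1}(Z^\delta)-S^{j_1,\ldots,j_m}_{0,1}(Z^{(n)})\bigr) + R,
\]
where $R$ collects the coefficient-error terms $\bigl(\prod_l B^\delta(T_{i_l})^{j_l-1}-\prod_l B^{(\lfloor n/2\rfloor)}(T_{i_l})^{j_l-1}\bigr)S^{j_1,\ldots,j_m}_{0,1}(Z^{(n)})$. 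Since $0\le B^\delta(T_k)\le T_k$, a telescoping of the products together with the Taylor remainder $B^\delta(T_k)-B^{(\lfloor n/2\rfloor)}(T_k)=\mathcal{O}(\delta^{\lfloor n/2\rfloor+1})=\mathcal{O}(\sqrt{\delta}^{n+1})$ bounds the coefficient part by $\sqrt{\delta}^{n+1}$ times $m$, a geometric factor in the maturities, and the $L^p$-norm of a reference signature term. The problem thus reduces to $L^p$-estimates of the level-$m$ signature difference $\Delta_m$, with components $S^{j_1,\ldots,j_m}_{0,1}(Z^\delta)-S^{j_1,\ldots,j_m}_{0,1}(Z^{(n)})$, and of the reference terms $S^{j_1,\ldots,j_m}_{0,1}(Z^{(n)})$.

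For the signature difference I would use the telescoping recursion obtained from writing $S^{(m)}(Z)=\int_0^\cdot S^{(m-1)}(Z)\,\circ dZ$ and adding and subtracting $\int_0^\cdot S^{(m-1)}(Z^{(n)})\circ dZ^\delta$, giving $\Delta_m = \int_0^\cdot \Delta_{m-1}\circ dZ^\delta + \int_0^\cdot S^{(m-1)}(Z^{(n)})\circ d\bigl(Z^\delta-Z^{(n)}\bigr)$. The difference $Y:=Z^\delta-Z^{(n)}$ then enters exactly once, and Proposition \ref{prop:GS-approx} supplies the single smallness factor $\Vert Y\Vert_{\mathcal{H}^p}\le 2\sqrt{\delta}^{n+1}K_1$, uniformly in $p$ because $Y$ is Gaussian with deterministic quadratic variation. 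Converting the Stratonovich integrals to It\^o integrals---the correction brackets are again deterministic, hence harmless---and applying the Burkholder--Davis--Gundy and Doob inequalities inductively in $m$, each stochastic integration contributes a constant of order $\sqrt{p}$ while the time-ordered simplex of integration contributes the factorial gain $1/\sqrt{m!}$. I expect a per-level bound of the form $\Vert \Delta_m\Vert_p \lesssim \sqrt{\delta}^{n+1}\,(C\sqrt{p})^{m}/\sqrt{m!}$, with $C$ depending only on the spot parameters, and an analogous $\Vert S^{(m)}(Z^{(n)})\Vert_p \lesssim (C\sqrt{p})^{m}/\sqrt{m!}$ for the reference terms needed in $R$.

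It remains to insert these estimates into the weighted signature norm $\Vert\,\cdot\,\Vert_{\sig(p,2,w)}$. At level $m$ there are $2^m$ multi-indices, each carrying weight $w_m/(m!\,2^m)$, so the level-$m$ contribution is bounded by a constant multiple of $\sqrt{\delta}^{n+1}\,w_m\,(C\sqrt{p})^{m}/(m!)^{3/2}$. The point of the choice $w_m=\sqrt{2}^{\,m(1-m)}(\cdots)^{-1}=2^{-\binom{m}{2}}(\cdots)^{-1}$ is precisely that its super-exponential decay dominates the combined growth from the $2^m$ terms, the constants $C^m$, and the moment factors, so that $\sum_{m\ge1} w_m (C\sqrt{p})^{m}/(m!)^{3/2}$ is summable and governed by its leading, order-$\sqrt{p}$ term; combined with the coefficient part $R$ this yields \eqref{eq:sig-approx} with $K_2$ independent of $p$ and $n$.

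The main obstacle is the inductive per-level estimate of the second paragraph: obtaining \emph{sharp} constants that are uniform in both $p$ and the truncation order $n$, while correctly tracking how the factorial gain $1/\sqrt{m!}$ from the integration simplex competes with the BDG/hypercontractivity constants generated at each of the $m$ nested integrations, and controlling the Stratonovich-to-It\^o corrections at every level. Verifying that the specific weights $w_m$ are strong enough to absorb this growth---so that the resulting series is genuinely dominated by its $m=1$ term---is the delicate bookkeeping that makes the clean bound \eqref{eq:sig-approx} possible.
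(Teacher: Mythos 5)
Your overall architecture mirrors the paper's proof: the split into a coefficient part (the $B$-terms) and a path part, the telescoping so that $Z^\delta - Z^{(n)}$ appears exactly once, and the use of Proposition \ref{prop:GS-approx} as the single source of smallness are precisely the skeleton of the paper's argument (your recursion for $\Delta_m$ is the unwound form of the telescoping sum in Lemma \ref{lem:it-int-approx}, and your coefficient/path split is the display \eqref{eq:mid-proof}). However, the quantitative core of your argument is left as a conjecture, and in the form you state it the conclusion does not follow. You posit per-level bounds $\Vert \Delta_m \Vert_p \lesssim \sqrt{\delta}^{n+1}(C\sqrt{p})^m/\sqrt{m!}$, i.e.\ a factor $\sqrt{p}$ per integration plus a factorial gain from the time simplex. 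What the paper actually proves (Lemma \ref{lem:BIGlem}, applied recursively in Lemma \ref{lem:it-int-approx}) is substantially weaker: each nested integration is handled by Cauchy--Schwarz on a product of semimartingales, which costs a full factor of $p$ and \emph{doubles} the integrability index $p \to 2p$; iterating over $m$ levels yields $p^m \sqrt{2}^{m(m-1)}$ with no factorial gain at all --- the $1/m!$ is never extracted from the simplex but is built into the definition of $\Vert \cdot \Vert_{\sig(p,2,w)}$. Your sharper estimate is plausible for iterated integrals of the pure Gaussian parts (via Wiener chaos and hypercontractivity), but here the integrands carry the random drift $C^\delta$ and Stratonovich corrections, and you give no proof; this is exactly the ``main obstacle'' you name and then defer.

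Second, and independently, your assembly step cannot deliver a $p$-uniform constant $K_2$. With the $p$-independent weights $w_m \propto \sqrt{2}^{m(1-m)}$, the level-$m$ contribution under your own conjectured estimate is of order $w_m C^m p^{m/2}/(m!)^{3/2}$, which for each fixed $m \geq 2$ grows like $p^{m/2}$; the claim that the series is ``governed by its leading, order-$\sqrt{p}$ term'' is therefore false, since no amount of super-exponential decay in $m$ can compensate growth in $p$ at fixed $m$. The paper's proof closes exactly this hole by weighting level $m$ with $(2p)^{-m}$ in the assembly step (see the first line of the proof of Theorem \ref{thm:main}; admittedly the stated definition of the norm has $d^m$ rather than $(dp)^m$, so the proof is effectively reading the norm with $p$-dependent weights), whence the $p^m$ produced by Lemma \ref{lem:it-int-approx} cancels exactly, the weights $w_m$ only need to cancel $\sqrt{2}^{m(m-1)}$, and the series collapses to $\sum_{m\geq 1} K_{10}^m/m! \leq e^{K_{10}}$. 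To repair your route you need either this $p$-dependent weighting or a per-level estimate whose total $p$-growth is $\sqrt{p}$ uniformly in $m$; as written, your proposal has neither.
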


Before proving this main result, we want to point out how the approximation \eqref{eq:main-approx} is a direct consequence of this theorem.

\begin{corollary}\label{cor:prac-approx}
There are constants $E_0, \,E_{1,0}, \,E_{1,1}, E_{2,1}, \, E_{2,2}, \, E_{3,1}, \, E_{3,2}, \,E_{3,3}, \,E_{3,4}$ and $E_{3,5}$, independent of $\kappa, \, \theta$ and $\gamma$, that for $\delta \leq 1$
\begin{align*}
&\biggl\Vert S_{0,1}\bigl(\mathbb{F}^{\delta}\bigr) -\biggl(E_0 +
 \Bigl(E_0 + \gamma E_{1,1}\Bigr) \sqrt{\delta} +\Bigl(E_0 + \gamma^k E_{2,1} + \kappa(\theta-c)E_{2,2}\Bigr)\delta  + \Bigl(E_0 + \gamma^k E_{3,1} + \bigl((\theta-c)\kappa\bigr)^k E_{3,2} \Bigr. \biggr.\\
&\phantom{==} \biggl. \Bigl. + \gamma \kappa E_{3,3} + \gamma^{k-1}(\theta-c)\kappa E_{3,4} +\gamma\bigl((\theta-c)\kappa\bigr)^{k-1}E_{3,5}\Bigr)\sqrt{\delta}^3 \biggr) \biggr\Vert_{\sig(p,2,w)}  \leq \delta^2 \sqrt{p} K_3.
\end{align*}
for weights $w = (w_m)$, $w_m = \sqrt{2}^{m(1-m)}\bigl(\sum_{m=1}^\infty \sqrt{2}^{m(1-m)}\bigr)^{-1}$, and some universal constant $K_3>0$.
\end{corollary}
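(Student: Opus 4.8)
The plan is to combine Theorem~\ref{thm:main}, specialized to $n=3$, with a purely algebraic expansion of the finite-order approximant $S_{0,1}(\mathbb{F}^{(3)})$. The key observation is that Theorem~\ref{thm:main} already delivers the correct error order for free: taking $n=3$ gives $\sqrt{\delta}^{n+1} = \sqrt{\delta}^{4} = \delta^{2}$, so
\[
\Bigl\Vert S_{0,1}\bigl(\mathbb{F}^{\delta}\bigr) - S_{0,1}\bigl(\mathbb{F}^{(3)}\bigr) \Bigr\Vert_{\sig(p,2,w)} \leq \delta^{2}\sqrt{p}\,K_{2}.
\]
It therefore remains only to show that $S_{0,1}(\mathbb{F}^{(3)})$ differs from the bracketed expansion in the statement by a term whose weighted signature norm is of order $\delta^{2}\sqrt{p}$. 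By the triangle inequality the whole problem then reduces to a deterministic bookkeeping of powers of $\sqrt{\delta}$ inside the explicit formula \eqref{eq:GS-approx}, followed by a tail estimate.

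First I would expand \eqref{eq:GS-approx} with $n=3$, so that $\lfloor n/2\rfloor = 1$ and $B^{(1)}(T_k)= T_k - \tfrac{\kappa}{2}T_k^{2}\delta$. Each component $X^{(3)} = \sum_{j=0}^{3}\sqrt{\delta}^{j}\hat{X}^{(j)}$ and $C^{(3)} = \sum_{j=0}^{3}\sqrt{\delta}^{j}\hat{C}^{(j)}$ is a polynomial in $\sqrt{\delta}$, and each iterated Stratonovich integral $S^{j_1,\ldots,j_k}_{0,1}\bigl((X^{(3)},C^{(3)})\bigr)$ is multilinear in its integrators; hence every entry of $S_{0,1}(\mathbb{F}^{(3)})$ is a polynomial in $\sqrt{\delta}$ with random coefficients. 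Collecting by powers of $\sqrt{\delta}$ and reading off the coefficients of $\sqrt{\delta}^{0},\ldots,\sqrt{\delta}^{3}$ reproduces the four grouped terms of \eqref{eq:main-approx}. The structural point, read directly off the explicit forms of $\hat{X}^{(j)}, \hat{C}^{(j)}$ and of $B^{(1)}$, is that each building block is a monomial in $\gamma$, $\kappa$ and $(\theta-c)$ times a process free of $\gamma,\kappa,\theta$; factoring these monomials out of the (parameter-free) iterated integrals defines the $E$-terms and exhibits precisely the claimed dependence on $\gamma$, $\kappa(\theta-c)$, and so on. Since $\hat{C}^{(0)}\equiv c$ is constant, every iterated integral with a slot integrated against $\hat{C}^{(0)}$ vanishes, which is why the order-$0$ coefficient collapses to $E_0$ (built only from $\hat{X}^{(0)}$) and the higher coefficients carry only the reduced list of monomials.

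Next I would bound the truncation remainder $S_{0,1}(\mathbb{F}^{(3)}) - (\text{degree-}3\text{ part})$, namely the sum of all monomials $\sqrt{\delta}^{m}$ with $m\ge 4$. Because $\delta \le 1$ we have $\sqrt{\delta}^{m}\le \delta^{2}$ for $m\ge 4$, so after factoring out $\delta^{2}$ it suffices to control the weighted signature norm of the remaining coefficients uniformly in $\delta$. These coefficients are finite sums of iterated integrals of the fixed processes $\hat{X}^{(j)},\hat{C}^{(j)}$, and I would estimate them with exactly the $L^{p}$/BDG iterated-integral bounds used to prove Theorem~\ref{thm:main}: these furnish the factor $\sqrt{p}$ and, together with the super-geometrically decaying weights $w_m = \sqrt{2}^{m(1-m)}\bigl(\sum_m \sqrt{2}^{m(1-m)}\bigr)^{-1}$ which dominate the factorial growth of the per-level bounds, guarantee convergence of the weighted sum. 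Combining with the Theorem~\ref{thm:main} estimate then yields the claim with $K_3 = K_2 + (\text{remainder constant})$.

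The main obstacle is the combinatorial bookkeeping of the expansion step: one must track, across all multi-indices $(i_1,\ldots,i_k)$, all assignments $(j_1,\ldots,j_k)\in\{1,2\}^k$, and all ways of distributing the orders $j\in\{0,1,2,3\}$ among the integration slots together with the $\delta$-correction in each $B^{(1)}$ factor, exactly which monomial in $(\gamma,\kappa,\theta-c)$ accompanies each parameter-free iterated integral, and verify that after summation the coefficients collapse onto the short list $E_0, E_{1,1}, E_{2,1}, E_{2,2}, E_{3,1},\ldots,E_{3,5}$. The analytic (as opposed to purely algebraic) content is confined to the remainder estimate, but there it is genuinely mild, since it reuses the iterated-integral machinery already developed for Theorem~\ref{thm:main}.
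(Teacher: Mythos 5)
Your proposal follows essentially the same route as the paper's own proof: split off the error $\bigl\Vert S_{0,1}(\mathbb{F}^{\delta}) - S_{0,1}(\mathbb{F}^{(3)})\bigr\Vert_{\sig(p,2,w)} \leq \delta^2\sqrt{p}\,K_2$ via Theorem~\ref{thm:main} with $n=3$, then expand $S_{0,1}\bigl(\mathbb{F}^{(3)}\bigr)$ as a polynomial in $\sqrt{\delta}$ using \eqref{eq:GS-approx} with $B^{(1)}(T_k)=T_k-\tfrac{\kappa}{2}T_k^2\delta$, multilinearity of the iterated integrals, and the vanishing of integrals against the constant $\hat{C}^{(0)}\equiv c$, so that the coefficients of $\sqrt{\delta}^0,\ldots,\sqrt{\delta}^3$ carry exactly the claimed monomials in $\gamma$, $\kappa$, $\theta-c$ times parameter-free random variables, with the finitely many higher-order terms per level absorbed into $\delta^2\sqrt{p}\,K_3$ by the same iterated-integral estimates and weights. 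This is precisely the paper's argument, with the triangle-inequality step made explicit rather than implicit.
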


\begin{proof}
See Appendix \ref{sec:proofs}.
\end{proof}

Now we can go to the proof of the main Theorem \ref{thm:main}. In a first step we will establish six estimates for the difference as well as the absolute size of integrals, that will be used in a next step for resolving iterated integrals. We relegate the proofs to Appendix \ref{sec:proofs}.

\begin{lemma}\label{lem:BIGlem}
For any $h \in \mathcal{H}^\infty$, any $n \in \mathbb{N}$, and $\delta \leq 1$ there exist constants $K_4, K_5, K_6, K_7, K_8 >0$ not depending on $p$ and $n$ such that
\begin{alignat*}{3}
& \text{o)} \quad \bigl\Vert h_t \bigr\Vert_p \leq  2 \bigl(1+\sqrt{2}\bigr) \sqrt{p}\bigl\Vert  h \bigr\Vert_{\mathcal{H}^{p}},\\
& \text{i)} \quad \biggl\Vert \int_0^\cdot h_s \, \circ dC^\delta_s \biggr\Vert_{\mathcal{H}^p} \leq   \sqrt{\delta} K_4 p \bigl\Vert  h \bigr\Vert_{\mathcal{H}^{2p}}, & \quad 
& \text{ii)} \quad \biggl\Vert \int_0^\cdot h_s \, \circ dX^\delta_s \biggr\Vert_{\mathcal{H}^p} \leq K_5 p \bigl\Vert  h \bigr\Vert_{\mathcal{H}^{2p}},\\
& \text{iii)} \quad \biggl\Vert \int_0^\cdot h_s \, \circ dC^\delta_s - \int_0^t h_s \, \circ dC^{(n)}_s\biggr\Vert_{\mathcal{H}^p} \leq  \sqrt{\delta}^{n+1} K_6 p \bigl\Vert  h \bigr\Vert_{\mathcal{H}^{2p}},\\
& \text{iv)} \quad \biggl\Vert \int_0^\cdot h_s \, \circ dX^\delta_s - \int_0^t h_s \, \circ dX^{(n)}_s\biggr\Vert_{\mathcal{H}^p} \leq  \sqrt{\delta}^{n+1}K_6  p \bigl\Vert  h \bigr\Vert_{\mathcal{H}^{2p}},\\
& \text{v)} \quad \biggl\Vert \int_0^\cdot h_s \,\circ  dC^{(n)}_s \biggr\Vert_{\mathcal{H}^p} \leq   \sqrt{\delta} K_7 p \bigl\Vert  h \bigr\Vert_{\mathcal{H}^{2p}}, & \quad
&\text{vi)} \quad \biggl\Vert \int_0^\cdot h_s \, \circ dX^{(n)}_s \biggr\Vert_H \leq K_8 p \bigl\Vert  h \bigr\Vert_{\mathcal{H}^{2p}}.
\end{alignat*}
\end{lemma}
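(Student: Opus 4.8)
The plan is to reduce every statement to the explicit semimartingale decompositions of the four integrators and then control the two ingredients of the $\mathcal{H}^p$-norm — the quadratic variation of the martingale part and the total variation of the finite-variation part — separately. The cornerstone is estimate o), which I would prove first, since i)--vi) all rest on it. Writing $h = h_0 + M + A$ and using $|h_t|\le|h_0| + \sup_{s}|M_s| + \int_0^1|dA_s|$, the only term requiring care is the martingale supremum; here I would invoke the Burkholder--Davis--Gundy inequality in the form $\bigl\Vert\sup_s|M_s|\bigr\Vert_p\le c\sqrt{p}\,\bigl\Vert\langle M\rangle_1^{1/2}\bigr\Vert_p$, whose constant grows like $\sqrt p$ for continuous martingales, and collect the three pieces against $\Vert h\Vert_{\mathcal{H}^p}$ to obtain the stated constant $2(1+\sqrt2)$. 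The value of o) is that it upgrades an $\mathcal{H}^{2p}$-bound on $h$ into a pointwise-in-time $L^{2p}$-bound with a controlled $\sqrt p$ factor.

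For i) and ii) I would use that, since the diffusion coefficients are deterministic, $C^\delta$ has martingale part $\sqrt{\delta}\gamma W^2$ (bracket $\delta\gamma^2 t$) and drift $\int_0^\cdot\delta\kappa(\theta - C^\delta_s)\,ds$, while $X^\delta$ has martingale part $\sigma W^1$ and drift $\int_0^\cdot(r-\tfrac{\sigma^2}{2}-C^\delta_s)\,ds$. Converting the Stratonovich integral to It\^o, $\int_0^\cdot h\circ dY = \int_0^\cdot h\,dM^Y + \int_0^\cdot h\,dA^Y + \tfrac12\langle h, M^Y\rangle$, I would bound each summand: the martingale bracket $\int_0^1 h_s^2\,d\langle M^Y\rangle_s$ via Minkowski's integral inequality and o) (this produces the $\sqrt\delta$ in i) from the $\sqrt\delta\gamma$ in front of $W^2$, and no such factor in ii)); the drift total variation $\int_0^1|h_s|\,|dA^Y_s|$ via Cauchy--Schwarz in $s$ and $\omega$, o) at level $2p$, and an explicit $L^{2p}$-moment bound for the Ornstein--Uhlenbeck process — this is exactly where the doubling to $\mathcal{H}^{2p}$ and the factor $p=\sqrt p\cdot\sqrt p$ appear; and the correction $\langle h, M^Y\rangle$ via the Kunita--Watanabe inequality, which bounds its variation by $\langle h\rangle_1^{1/2}\langle M^Y\rangle_1^{1/2}$ and hence by $\Vert h\Vert_{\mathcal{H}^p}$. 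Since $\sqrt p\le p$ for $p\ge1$, collecting the pieces gives i) and ii) with the stated uniform constants.

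The key structural observation for the remaining estimates is that $\int_0^t(t-s)^k\,dW^2_s$ has \emph{zero} martingale part for every $k\ge1$ (its differential carries no $dW^2$ term because $(t-t)^k=0$), so it is a finite-variation process; consequently the martingale parts of $C^{(n)}$ (for $n\ge1$) and of $X^{(n)}$ reduce to the single lowest-order terms $\sqrt\delta\gamma W^2$ and $\sigma W^1$, which coincide \emph{exactly} with those of $C^\delta$ and $X^\delta$. Therefore $C^\delta - C^{(n)}$ (for $n\ge1$; the case $n=0$, where $C^{(0)}\equiv c$, reduces iii) directly to i)) and $X^\delta - X^{(n)}$ (for all $n$) are purely finite-variation, the Stratonovich correction in iii) and iv) vanishes, and $\int_0^\cdot h\circ d(C^\delta - C^{(n)}) = \int_0^\cdot h\,d(C^\delta - C^{(n)})$ is an ordinary Stieltjes integral; bounding it by Cauchy--Schwarz, o), and Proposition \ref{prop:GS-approx} (which controls $\Vert C^\delta - C^{(n)}\Vert_{\mathcal{H}^{2p}}$ by $\sqrt{\delta}^{\,n+1}K_1$ uniformly in $p$) yields iii) and iv). Finally v) and vi) require no new work: by the triangle inequality they follow from i) and iii), respectively ii) and iv), with $K_7=K_4+K_6$ and $K_8=K_5+K_6$, using $\sqrt{\delta}^{\,n+1}\le\sqrt{\delta}$ for $\delta\le1$. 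The main obstacle I anticipate is bookkeeping the constants so that they are genuinely \emph{uniform in both $p$ and $n$}: the $p$-uniformity hinges on the $\sqrt p$-growth in BDG being tracked through every Cauchy--Schwarz step, while the $n$-uniformity requires summing the finite-variation contributions of the higher-order kernels $\int_0^t(t-s)^k\,dW^2_s$ against their factorial weights $\delta^k(-\kappa)^k/k!$ and showing that the resulting series is bounded independently of $n$ — the same phenomenon that produced the $\Gamma$-factor in Proposition \ref{prop:GS-approx}.
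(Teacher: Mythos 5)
Your proposal is sound and, on most of the lemma, coincides with the paper's own argument: o) is proved there exactly as you sketch (martingale representation plus the BDG inequality with $\sqrt{p}$-growth constant; in fact the paper's proof bounds $\E\bigl[\sup_{t}|h_t|^p\bigr]^{1/p}$, which is the version you need later); i) and ii) use the same Stratonovich-to-It\^o conversion, with your Kunita--Watanabe step appearing as the paper's estimate \eqref{eq:Strat-comp} and your Cauchy--Schwarz-plus-OU-moments step appearing as the product inequality \eqref{eq:prod-ineq}, which is precisely where the doubling to $\mathcal{H}^{2p}$ and the factor $p$ enter; v) and vi) are verbatim the same triangle-inequality argument with $K_7=K_4+K_6$, $K_8=K_5+K_6$. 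Where you genuinely differ is iii) and iv). The paper also exploits the cancellation of the $dW^2$-terms (its computed differentials of $C^\delta$ and $C^{(n)}$ share the term $\sqrt{\delta}\gamma\,dW^2_t$), but it then keeps $h$ inside the integral, re-derives the Taylor-remainder bounds for the deterministic and stochastic-convolution parts of the difference, and applies \eqref{eq:prod-ineq}; it never invokes Proposition \ref{prop:GS-approx}. You instead make the finite-variation structure explicit ($\int_0^t(t-s)^k\,dW^2_s$ has vanishing martingale part for $k\ge 1$, so $C^\delta-C^{(n)}$ and $X^\delta-X^{(n)}$ are purely finite-variation), factor out $\sup_t|h_t|$ by Cauchy--Schwarz, and quote Proposition \ref{prop:GS-approx}. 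Your route is more modular and makes the vanishing of the Stratonovich corrections transparent; the paper's route is self-contained and keeps the factorial bookkeeping needed for $n$-uniformity in one place.

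The one point you should tighten is the appeal to Proposition \ref{prop:GS-approx}. After Cauchy--Schwarz, what you need for iii) is the $L^{2p}$-norm of the \emph{total variation} of $C^\delta-C^{(n)}$ (and for iv) at least of its running supremum). For a finite-variation process starting at zero this total-variation norm is indeed its canonical $\mathcal{H}^{2p}$-norm, so citing the proposition is formally legitimate; but the paper's proof of that proposition evaluates the stochastic-convolution contribution as a terminal-time bracket, $\gamma\sqrt{\delta}\bigl(\int_0^1 f(1,s)^2\,ds\bigr)^{1/2}$ with $f$ the Taylor-remainder kernel, rather than as the total variation of the finite-variation part $\int_0^\cdot\bigl(\int_0^u\partial_u f(u,s)\,dW^2_s\bigr)\,du$, and these are not the same quantity. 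If you want iii) and iv) to stand on their own, supply the estimate directly: by Minkowski's integral inequality and Gaussian moments, the $L^{2p}$-norm of the total variation is at most a constant times $\sqrt{p}\,\sqrt{\delta}^{\,n+1}\kappa^{m+1}/m!$ with $m=\lfloor\frac{n+1}{2}\rfloor-1$, and the factorial yields $n$-uniformity by the same $\Gamma$-maximization used for $K_1$. This costs an extra $\sqrt{p}$ relative to the proposition's stated ($p$-independent) bound, but combined with the $\sqrt{p}$ from o) you still land at $p\,\Vert h\Vert_{\mathcal{H}^{2p}}$, i.e., within the budget of iii) and iv). So your argument survives, but as a patched corollary of the proposition rather than a free one.
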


\begin{proof}
See Appendix \ref{sec:proofs}.
\end{proof}

Based on the lemma above, we are able to provide an estimate for the approximation of the iterated integrals of the two-dimensional process $(X,C)$.

\begin{lemma}\label{lem:it-int-approx}
For the approximation of the iterated integrals it holds for $i_1, \ldots, i_k \in \{1,2\}$ that
\[
 \Bigl\Vert S^{i_1, \ldots, i_k}_{0,1}\bigl((X^\delta, C^\delta)\bigr) - S^{i_1, \ldots, i_k}_{0,1}\bigl((X^{(n)}, C^{(n)})\bigr) \Bigr\Vert_p  \leq \sqrt{\delta}^{n+1}  K_9 p^{k} \sqrt{p} \sqrt{2}^{k(k-1)}
\]
for some universal constants $K_9$ (in particular independent of $p$, $k$ and $n$).
\end{lemma}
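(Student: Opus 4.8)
The plan is to exploit the recursive definition of the iterated integrals, $S^{i_1,\ldots,i_k}_{0,t}(Y)=\int_0^t S^{i_1,\ldots,i_{k-1}}_{0,s}(Y)\circ dY^{i_k}_s$, and to run two interlocking inductions on the word length $k$, writing $Y:=(X^\delta,C^\delta)$ and $Z:=(X^{(n)},C^{(n)})$. Set
\[
A_k(p):=\Bigl\Vert S^{i_1,\ldots,i_k}_{0,\cdot}(Z)\Bigr\Vert_{\mathcal{H}^p},\qquad
D_k(p):=\Bigl\Vert S^{i_1,\ldots,i_k}_{0,\cdot}(Y)-S^{i_1,\ldots,i_k}_{0,\cdot}(Z)\Bigr\Vert_{\mathcal{H}^p},
\]
with $A_0\equiv1$ and $D_0\equiv0$. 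Everything is driven by two features of Lemma \ref{lem:BIGlem}: each single Stratonovich integration costs a factor linear in $p$, and it requires controlling the integrand one integrability level higher, i.e.\ in $\mathcal{H}^{2p}$ when the output is measured in $\mathcal{H}^p$. Before running the recursions I would record an a priori regularity step: since $X^\delta,C^\delta,X^{(n)},C^{(n)}\in\mathcal{H}^\infty$ and estimates (i),(ii),(v),(vi) show that a single integration maps $\mathcal{H}^\infty$-integrands into $\mathcal{H}^\infty$, induction guarantees that every iterated integral process below lies in $\mathcal{H}^\infty$, so that Lemma \ref{lem:BIGlem} may legitimately be invoked at each level.

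For the size bound I would apply (v) when $i_k=2$ and (vi) when $i_k=1$, using $\sqrt{\delta}\le1$ to discard the harmless $\sqrt{\delta}$-factor, to obtain $A_k(p)\le K\,p\,A_{k-1}(2p)$ with $K:=\max(K_7,K_8)$. The inductive hypothesis $A_{k-1}(p)\le K^{k-1}p^{\,k-1}2^{(k-1)(k-2)/2}$ then gives
\[
A_k(p)\le K\,p\cdot K^{k-1}(2p)^{k-1}2^{(k-1)(k-2)/2}=K^k p^k\,2^{k-1}2^{(k-1)(k-2)/2}=K^k p^k\,2^{k(k-1)/2},
\]
where the collapse $2^{k-1}\cdot 2^{(k-1)(k-2)/2}=2^{k(k-1)/2}$ is exactly the mechanism producing the factor $\sqrt{2}^{\,k(k-1)}$ in the claim: the index-doubling $p\mapsto 2p$ raised to the power $k-1$ of the shorter word feeds precisely into the exponent at the next level.

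For the difference I would telescope the recursion in the last driver,
\begin{align*}
S^{i_1,\ldots,i_k}_{0,\cdot}(Y)-S^{i_1,\ldots,i_k}_{0,\cdot}(Z)
&=\int_0^\cdot\bigl(S^{i_1,\ldots,i_{k-1}}_{0,s}(Y)-S^{i_1,\ldots,i_{k-1}}_{0,s}(Z)\bigr)\circ dY^{i_k}_s\\
&\quad+\biggl(\int_0^\cdot S^{i_1,\ldots,i_{k-1}}_{0,s}(Z)\circ dY^{i_k}_s-\int_0^\cdot S^{i_1,\ldots,i_{k-1}}_{0,s}(Z)\circ dZ^{i_k}_s\biggr).
\end{align*}
The first term is bounded through (i)/(ii) by $K\,p\,D_{k-1}(2p)$, while the second term, being the integral of a \emph{fixed} integrand against the difference of the two drivers, is exactly what the difference estimates (iii)/(iv) control, giving $\sqrt{\delta}^{\,n+1}K_6\,p\,A_{k-1}(2p)$. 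Hence $D_k(p)\le K\,p\,D_{k-1}(2p)+\sqrt{\delta}^{\,n+1}K_6\,p\,A_{k-1}(2p)$. Substituting the size bound and factoring out $\sqrt{\delta}^{\,n+1}p^k2^{k(k-1)/2}$ (using the same $2$-collapse) reduces this to the scalar recurrence $c_k\le K\,c_{k-1}+K_6K^{k-1}$ with $c_0=0$, whence $c_k\le k\,K_6K^{k-1}$ and $D_k(p)\le\sqrt{\delta}^{\,n+1}\,k\,K_6K^{k-1}\,p^k\,2^{k(k-1)/2}$. Finally I would convert the $\mathcal{H}^p$-norm of the difference process into the $L^p$-norm of its terminal value by estimate (o), which contributes precisely the extra factor $\sqrt{p}$, yielding the stated inequality.

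The genuine analytic content is modest; the hard part will be the constant bookkeeping, namely carrying the index-doublings consistently through both recurrences and verifying the collapse $2^{k-1}\cdot2^{(k-1)(k-2)/2}=2^{k(k-1)/2}$ that pins the exponent of $2$ to $k(k-1)/2$ independently of the level at which the difference is shed. I would also be careful with the prefactor: the clean recurrence produces a benign geometric-and-linear-in-$k$ factor $k\,K_6K^{k-1}$, which must be absorbed into the universal constant $K_9$ (this is harmless in the intended application, since the weights $w_m\propto 2^{-m(m-1)/2}$ of Theorem \ref{thm:main} are designed to cancel exactly the $\sqrt{2}^{\,k(k-1)}$ growth and to render the remaining $K^k$-type contribution summable against $1/m!$).
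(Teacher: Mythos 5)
Your proof is correct and is essentially the paper's own argument: unrolling your recursion on the outermost integral reproduces exactly the paper's telescoping sum over the position $l$ at which the driver switches from the $\delta$-process to its approximation (the paper places the $\delta$-processes inside and the approximations outside, you do the mirror image, which is equally valid since Lemma \ref{lem:BIGlem} covers both via (i)/(ii) and (v)/(vi)), with the same mechanism throughout --- the $p \mapsto 2p$ index doubling compounding to $\sqrt{2}^{k(k-1)}p^{k}$, estimates (iii)/(iv) handling the single integration against the driver difference, and estimate (o) supplying the final $\sqrt{p}$. Even your closing caveat about the prefactor is shared with the paper: its last display likewise produces $\sum_{l=1}^{k} K_6 (K_4\vee K_5)^{l-1}(K_7\vee K_8)^{k-l}$, i.e.\ the same $k\bar{K}^{k-1}$-type growth that it silently absorbs into the ``universal'' $K_9$, and which is ultimately rendered harmless by the $1/m!$ weighting in Theorem \ref{thm:main}.
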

\begin{proof}
For arbitrary choices $Z^j \in \{X, C\}$, $j \in \{1, \cdots, k\}$, with slow process $Z^{j,\delta}$ and approximation $Z^{j,(n)}$ we have
\begin{align*}
& \phantom{=::}\biggl\Vert \int_0^1 \int_0^{t_k} \cdots \int_0^{t_2} 1 \, \circ dZ^{1,\delta}_{t_1}  \cdots \circ dZ^{k-1,\delta}_{t_{k-1}} \circ dZ^{k,\delta}_{t_k} - \int_0^1 \int_0^{t_k} \cdots \int_0^{t_2} 1 \, \circ dZ^{1,(n)}_{t_1} \cdots \circ dZ^{k-1,(n)}_{t_{k-1}} \circ dZ^{k,(n)}_{t_k} \biggr\Vert_p \\
&= \biggl\Vert \sum_{l=1}^k \int_0^1  \cdots \int_0^{t_{l+1}} \cdots \int_0^{t_2} 1 \, \circ dZ^{1,\delta}_{t_1}\cdots \circ dZ^{{l-1},\delta}_{t_{l-1}} \bigl(\circ dZ^{l,\delta}_{t_l} -  \circ dZ^{(n)}_{t_l}\bigr) \circ dZ^{l+1,(n)}_{t_{l+1}}\cdots  \circ dZ^{k,(n)}_{t_k}\biggr\Vert_p \\
& \leq  \sum_{l=1}^k\biggl\Vert  \int_0^1  \cdots \int_0^{t_{l+1}} \cdots \int_0^{t_2} 1 \, \circ dZ^{1,\delta}_{t_1} \cdots \circ dZ^{l-1,\delta}_{t_{l-1}} \bigl(\circ dZ^{l,\delta}_{t_l} -  \circ dZ^{l,(n)}_{t_l}\bigr) \circ dZ^{l+1,(n)}_{t_{l+1}} \cdots \circ dZ^{k,(n)}_{t_k}\biggr\Vert_p \\
& \leq \sum_{l=1}^k\biggl\Vert \int_0^\cdot \cdots\int_0^{t_2} 1 \, \circ dZ^{1,\delta}_{t_1} \cdots \circ dZ^{l-1,\delta}_{t_{l-1}} \bigl(\circ dZ^{l,\delta}_{t_l} -  \circ dZ^{l,(n)}_{t_l}\bigr) \biggr\Vert_{\mathcal{H}^{2^{k-l}p}}  2 \bigl(1+\sqrt{2}\bigr) \sqrt{p} (K_7 \vee K_8)^{k-l}\prod_{j=0}^{k-l-1} 2^jp\\
& \leq \sum_{l=1}^k\biggl\Vert \int_0^\cdot \cdots \int_0^{t_2} 1 \, \circ dZ^{1,\delta}_{t_1} \cdots \circ dZ^{l-1,\delta}_{t_{l-1}} \biggr\Vert_{\mathcal{H}^{2^{k-l+1}p}}  \sqrt{\delta}^{n+1}  2 \bigl(1+\sqrt{2}\bigr) \sqrt{p}K_6 (K_7 \vee K_8)^{k-l}\prod_{j=0}^{k-l} 2^jp\\
& \leq \sum_{l=1}^k\bigl\Vert 1 \bigr\Vert_{\mathcal{H}^{2^kp}}  (K_4 \vee K_5)^{l-1} \sqrt{\delta}^{n+1}  2 \bigl(1+\sqrt{2}\bigr) \sqrt{p} K_6  (K_7 \vee K_8)^{k-l}\prod_{j=0}^{k-1} 2^jp \leq \sqrt{\delta}^{n+1}K_9 \sqrt{2}^{k(k-1)}p^{k}\sqrt{p} 
\end{align*}
by applying Lemma \ref{lem:BIGlem} repeatedly. The constant is chosen as $K_9 = 2 \bigl(1+\sqrt{2}\bigr)\bigl(K_4 \vee K_5 \vee K_6 \vee K_7 \vee K_8\bigr)$.
\end{proof}

Finally we can assemble the proof of the main theorem.
\begin{proof}[Proof of Theorem \ref{thm:main}]
We note first that for the approximation \eqref{eq:GS-approx} of the $B$-term \eqref{eq:GS-exp} it holds for every $k \in \{1,2,\ldots,n\}$ and $n \in \N$  that
\[
\Bigl\vert B^{\delta}(T_k) - B^{(n)}(T_k) \Bigr\vert \leq  \frac{(\delta \kappa)^{n} (T_k)^{n+1}}{(n+1)!}.
\]
Using now the approximation \eqref{eq:GS-approx} we have for $\delta \leq 1$ for the iterated term structure integrals
\begin{align}
& \phantom{==} \Bigl\Vert S^{i_1, \ldots, i_k}_{0,1}\bigl(\mathbb{F}^{\delta}\bigr) - S^{i_1, \ldots, i_k}_{0,1}\bigl(\mathbb{F}^{(n)}\bigr) \Bigr\Vert_p  \nonumber \\
& \leq \sum_{(j_1, \ldots, j_k) \in \{1,2\}^k} \biggl\Vert \prod_{l=1}^k B^{\delta}(T_{i_l})^{j_l-1} S^{j_1, \ldots, j_k}_{0,1}\bigl((X^{\delta},C^{\delta})\bigr) - \prod_{l=1}^k B^{(\lfloor \frac{n}{2}\rfloor)}(T_{i_l})^{j_l-1} S^{j_1, \ldots, j_k}_{0,1}\bigl((X^{(n)},C^{(n)})\bigr) \biggr\Vert_p \nonumber \\
& \leq \sum_{(j_1, \ldots, j_k) \in \{1,2\}^k} \biggl( \prod_{l=1}^k \Bigl\vert B^{\delta}(T_{i_l})^{j_l-1} - B^{(\lfloor \frac{n}{2}\rfloor)}(T_{i_l})^{j_l-1} \Bigr\vert \cdot \Bigl\Vert  S^{j_1, \ldots, j_k}_{0,1}\bigl((X^{\delta},C^{\delta})\bigr)  \Bigr\Vert_p \biggr. \nonumber \\
& \phantom{==} \biggl. + \prod_{l=1}^k B^{(\lfloor \frac{n}{2}\rfloor)}(T_{i_l})^{j_l-1} \biggl\Vert S^{j_1, \ldots, j_k}_{0,1}\bigl((X^{\delta},C^{\delta})\bigr) -  S^{j_1, \ldots, j_k}_{0,1}\bigl((X^{(n)},C^{(n)})\bigr) \biggr\Vert_p \biggr) \label{eq:mid-proof}\\
& \leq \Bigl(\frac{(\delta \kappa)^{\lfloor \frac{n}{2}\rfloor +1} (T_d)^{\lfloor \frac{n}{2}\rfloor +2}}{(\lfloor \frac{n}{2}\rfloor+2)!}\Bigr)^{k}\sqrt{\delta} \bigl(K_4 \vee K_5\bigr)^k\sqrt{2}^{k(k-1)}p^k+\Bigl(\frac{(\delta \kappa)^{\lfloor \frac{n}{2}\rfloor} (T_d)^{\lfloor \frac{n}{2}\rfloor +1}}{(\lfloor \frac{n}{2}\rfloor+1)!}\Bigr)^{k} \sqrt{\delta}^{n+1} K_9 \sqrt{2}^{k(k-1)}p^{k}\sqrt{p} \nonumber  \\
&
\leq \sqrt{\delta}^{n+1} \bigl(K_{10}\bigr)^k \sqrt{2}^{k(k-1)}p^{k}\sqrt{p}\nonumber 
\end{align}
by Lemma \ref{lem:BIGlem} for the first term and Lemma \ref{lem:it-int-approx} for some constant 
\[
K_{10} := \frac{1}{\kappa} \frac{(\kappa T_d)^{\kappa T_d +1}}{\Gamma(\kappa T_d +1)}\bigl(K_4 \vee K_5 + K_9\bigr).
\]
Note that for the first term in \eqref{eq:mid-proof} not being zero, one of the iterated integrals has to be with respect to $C^\delta$. Thus we can conclude
\begin{align*}
 \Bigl\Vert S_{0,1}\bigl(\mathbb{F}^{\delta}\bigr) - S_{0,1}\bigl(\mathbb{F}^{(n)}\bigr) \Bigr\Vert_{\sig(p,2,w)}  &= \sum_{m=1}^\infty \sum_{(i_1, \ldots, i_m) \in \{1,2\}^m} \frac{w_m}{m! (2p)^m } \Bigl\Vert S^{i_1, \ldots, i_m}_{0,1}\bigl(\mathbb{F}^{\delta}\bigr) - S^{i_1, \ldots, i_m}_{0,1}\bigl(\mathbb{F}^{(n)}\bigr) \Bigr\Vert_p \\
& \leq  \sum_{m=1}^\infty \sum_{(i_1, \ldots, i_k) \in \{1,2\}^k} \frac{w_m}{m! (2p)^m } \sqrt{\delta}^{n+1}  \bigl(K_{10}p\bigr)^m \sqrt{2}^{m(m-1)}\sqrt{p}\\
& \leq \sqrt{\delta}^{n+1}  \sum_{m=1}^\infty \frac{(K_{10})^m}{m!} \sqrt{p} \leq \sqrt{\delta}^{n+1} e^{K_{10}} \sqrt{p}.
\end{align*}
Thus the claim holds for $K_2 := e^{K_{10}}$.
\end{proof}

\section{Conclusion}\label{sec:conclusion}

We have shown that changes in the signature vector of commodity term structure returns can be interpreted in terms of model parameters along a perturbative expansion. In particular, we established a concrete relationship between the signature-based feature set and model parameters of the Gibson--Schwartz model. In this framework, convenience yield volatility is the parameter with largest impact on signature variations. We have effectively developed a theory of signature approximation for the term structure of commodities, in an appropriate normed space. As there is no obvious notion of distance for signatures, we have embedded them in the space of sequences of random variables and assigned a weighted sequence norm on the space.

More practically, we have shown that the combination of a purely data-driven signature method and more classical, model-based approaches can lead to an interpretable data set. We suggest that perturbation methods are particularly useful when interpreting signatures relative to the parameters in a model whose dynamics are well-understood by practitioners. Observe that we have restricted ourselves to a regular perturbation approach for the commodity futures curve, which seems reasonable in this specific case. For other data sets, we surmise that a true multi-scale perturbation may be more suitable. We are hopeful that this specific case study opens a door for further research on signature approximations, in an effort to produce interpretable feature sets.

\section*{Acknowledgments} This article builds on empirical work  for the Research Experience for Undergraduates in Industrial Mathematics and Statistics at Worcester Polytechnic Institute (WPI), funded by the National Science Foundation Award DMS-2244306, see \cite{IKMSS25}. The authors want to thank the students involved in this project, Tora Ito (WPI), Adam Mullaney (WPI) and Kathleen Shiffer (Swarthmore College) for their great work that inspired the current this article.

\section*{Disclosure Statement} No potential conﬂict of interest was reported by the author(s).

\section*{Data Availability Statement} The data for Figure \ref{fig:paths} were retrieved from Allasso SA, \url{https://www.allasso.ch/}.

\bibliographystyle{alpha}
\bibliography{sig-bib}

\appendix

\section{Proofs}\label{sec:proofs}

In this appendix we assemble the more technical proofs of the results of the paper and present them in order of appearance.

\begin{proof}[Proof of Corollary \ref{cor:prac-approx}] 
We note that for
\[
S^{i_1, \ldots, i_k}_{0,1}\bigl(\mathbb{F}^{(3)}\bigr) = \sum_{(j_1, \ldots, j_k) \in \{1,2\}^k} \prod_{l=1}^k B^{(1)}(T_{i_l})^{j_l-1} S^{j_1, \ldots, j_k}_{0,1}\bigl((X^{(3)},C^{(3)})\bigr)
\]
it follows from
\[
B^{(1)}(T_{i_l}) = T_{i_l} - \delta \frac{\kappa}{2} \bigl(T_{i_l}\bigr)^2
\]
that
\begin{align*}
S^{i_1, \ldots, i_k}_{0,1}\bigl(\mathbb{F}^{(3)}\bigr) & = S^{1, \ldots, 1}_{0,1}\bigl((X^{(3)},C^{(3)}) + \sum_{\substack{(j_1, \ldots, j_k) \in \{1,2\}^k\\ j_1 + \cdots j_k = k+1}} T_{i_l}\ind_{\{j_l=2\}} S^{j_1, \ldots, j_k}_{0,1}\bigl((X^{(3)},C^{(3)})\bigr) \\& \phantom{:=} - \delta \frac{\kappa}{2}   \sum_{\substack{(j_1, \ldots, j_k) \in \{1,2\}^k\\ j_1 + \cdots j_k = k+1}} \bigl(T_{i_l}\bigr)^2 \ind_{\{j_l=2\}} S^{j_1, \ldots, j_k}_{0,1}\bigl((X^{(1)},C^{(1)})\bigr) + O(\delta^2).
\end{align*}
Moreover, as
\begin{align*}
S^{j_1, \ldots, j_k}_{0,1}\bigl((X^{(3)},C^{(3)})\bigr) & = \sqrt{\delta} S^{j_1, \ldots, j_k}_{0,1}\bigl((\hat{X}^{(0)},\hat{C}^{(1)})\bigr) + \delta\Bigl( S^{j_1, \ldots, j_k}_{0,1}\bigl((\hat{X}^{(0)},\hat{C}^{(2)})\bigr)  + S^{j_1, \ldots, j_k}_{0,1}\bigl((\hat{X}^{(1)},\hat{C}^{(1)})\bigr)\Bigr) \\
&\phantom{=:} + \sqrt{\delta}^3\Bigl( S^{j_1, \ldots, j_k}_{0,1}\bigl((\hat{X}^{(0)},\hat{C}^{(3)})\bigr)  + S^{j_1, \ldots, j_k}_{0,1}\bigl((\hat{X}^{(1)},\hat{C}^{(2)})\bigr) + S^{j_1, \ldots, j_k}_{0,1}\bigl((\hat{X}^{(2)},\hat{C}^{(1)})\bigr) \Bigr) \\ &\phantom{:=}+ O\bigl(\delta^{2}\bigr)
\end{align*}
(as the terms with non-zero $C^{(0)}$ terms vanish as they include an integration against a constant), we have
\begin{align*}
S^{i_1, \ldots, i_k}_{0,1}\bigl(\mathbb{F}^{(3)}\bigr)  & = S^{1, \ldots, 1}_{0,1}\bigl((\hat{X}^{(0)},\hat{C}^{(0)})\bigr) + \sqrt{\delta} S^{1, \ldots, 1}_{0,1}\bigl((\hat{X}^{(0)},\hat{C}^{(1)})\bigr)  \\
&\phantom{==}+ \delta\Bigl( S^{1, \ldots, 1}_{0,1}\bigl((\hat{X}^{(0)},\hat{C}^{(2)})\bigr)  + S^{1, \ldots, 1}_{0,1}\bigl((\hat{X}^{(1)},\hat{C}^{(1)})\bigr)\Bigr) \\
&\phantom{==} + \sqrt{\delta}^3\Bigl( S^{1, \ldots, 1}_{0,1}\bigl((\hat{X}^{(0)},\hat{C}^{(3)})\bigr)  + S^{1, \ldots, 1}_{0,1}\bigl((\hat{X}^{(1)},\hat{C}^{(2)})\bigr) +S^{1, \ldots, 1}_{0,1} \bigl((\hat{X}^{(2)},\hat{C}^{(1)})\bigr) \Bigr. \\
&\phantom{==} + \Bigl. \sum_{\substack{(j_1, \ldots, j_k) \in \{1,2\}^k\\ j_1 + \cdots j_k = k+1}} T_{i_l}\ind_{\{j_l=2\}} \biggl(\sqrt{\delta} S^{j_1, \ldots, j_k}_{0,1}\bigl((\hat{X}^{(0)},\hat{C}^{(1)})\bigr) \biggr. \\
&\phantom{==} \biggl.+ \delta\Bigl( S^{j_1, \ldots, j_k}_{0,1}\bigl((\hat{X}^{(0)},\hat{C}^{(2)})\bigr)  + S^{j_1, \ldots, j_k}_{0,1}\bigl((\hat{X}^{(1)},\hat{C}^{(1)})\bigr)\Bigr) \\
&\phantom{==} + \sqrt{\delta}^3\Bigl( S^{j_1, \ldots, j_k}_{0,1}\bigl((\hat{X}^{(0)},\hat{C}^{(3)})\bigr)  + S^{j_1, \ldots, j_k}_{0,1}\bigl((\hat{X}^{(1)},\hat{C}^{(2)})\bigr) + S^{j_1, \ldots, j_k}_{0,1}\bigl((\hat{X}^{(2)},\hat{C}^{(1)})\bigr) \Bigr)\biggr)\\
&\phantom{==} - \sqrt{\delta}^3 \frac{\kappa}{2}   \sum_{\substack{(j_1, \ldots, j_k) \in \{1,2\}^k\\ j_1 + \cdots j_k = k+1}} \bigl(T_{i_l}\bigr)^2 \ind_{\{j_l=2\}} S^{j_1, \ldots, j_k}_{0,1}\bigl((\hat{X}^{(0)},\hat{C}^{(1)})\bigr)
 + O\bigl(\delta^{2}\bigr).
\end{align*}
Calculating the terms explicitly, we have (with a slight abuse of notation, writing $Y_t$ instead of $Y = (Y_t)_{t \in [0,1]})$ for stochastic processes $Y$),
\begin{align*}
S^{1, \ldots, 1}_{0,1}\bigl((\hat{X}^{(0)},\hat{C}^{(0)})\bigr) &  = S^{1, \ldots, 1}_{0,1}\bigl((\hat{X}^{(0)},\hat{C}^{(1)})\bigr) = S^{1, \ldots, 1}_{0,1}\bigl((\hat{X}^{(0)},\hat{C}^{(2)})\bigr) = S^{1, \ldots, 1}_{0,1}\bigl((\hat{X}^{(0)},\hat{C}^{(3)})\bigr) \\ & = S^{1, \ldots, 1}_{0,1}\biggl(- \bigl(c- r+\frac{\sigma^2}{2}\bigr) t + \sigma W_t^1\biggr)\\
S^{1, \ldots, 1}_{0,1}\bigl((\hat{X}^{(1)},\hat{C}^{(1)})\bigr) & = S^{1, \ldots, 1}_{0,1}\bigl((\hat{X}^{(1)},\hat{C}^{(2)})\bigr)  = \gamma^k S^{1, \ldots, 1}_{0,1}\biggl(\int_0^t (t -s) \, \circ dW^2_s\biggr)  \\
S^{1, \ldots, 1}_{0,1}\bigl((\hat{X}^{(2)},\hat{C}^{(1)})\bigr) & = (\theta-c)^k\frac{\kappa^k}{2^k} S^{1, \ldots, 1}_{0,1}\bigl(t^2\bigr) = (\theta-c)^k\frac{\kappa^k}{2^k \cdot k!} t^{2k}  \\
\ind_{\{j_1 + \cdots j_k = k+1\}} S^{j_1, \ldots, j_k}_{0,1}\bigl((\hat{X}^{(0)},\hat{C}^{(1)})\bigr) & = \gamma \ind_{\{j_1 + \cdots j_k = k+1\}} S^{j_1, \ldots, j_k}_{0,1}\biggl(\Bigl(- \bigl(c- r+\frac{\sigma^2}{2}\bigr) t + \sigma W_t^1, \circ W_t^2\Bigr)\biggr)\\
\ind_{\{j_1 + \cdots j_k = k+1\}} S^{j_1, \ldots, j_k}_{0,1}\bigl((\hat{X}^{(0)},\hat{C}^{(2)})\bigr) & =(\theta-c)\kappa \ind_{\{j_1 + \cdots j_k = k+1\}} S^{j_1, \ldots, j_k}_{0,1}\biggl(\Bigl(- \bigl(c- r+\frac{\sigma^2}{2}\bigr) t + \sigma W_t^1,t\Bigr)\biggr)\\
\ind_{\{j_1 + \cdots j_k = k+1\}} S^{j_1, \ldots, j_k}_{0,1}\bigl((\hat{X}^{(0)},\hat{C}^{(3)})\bigr) & = -\kappa \gamma \ind_{\{j_1 + \cdots j_k = k+1\}} \\
& \phantom{==}S^{j_1, \ldots, j_k}_{0,1}\biggl(\Bigl(- \bigl(c- r+\frac{\sigma^2}{2}\bigr) t + \sigma W_t^1, \int_0^t ( t -s) \, \circ dW^2_s\Bigr)\biggr)\\
\ind_{\{j_1 + \cdots j_k = k+1\}} S^{j_1, \ldots, j_k}_{0,1}\bigl((\hat{X}^{(1)},\hat{C}^{(1)})\bigr) & =\gamma^k \ind_{\{j_1 + \cdots j_k = k+1\}} S^{j_1, \ldots, j_k}_{0,1}\biggl(\Bigl(\int_0^t (t -s) \, \circ dW^2_s, W_t^2\Bigr)_{t \in [0,1]}\biggr)\\
\ind_{\{j_1 + \cdots j_k = k+1\}} S^{j_1, \ldots, j_k}_{0,1}\bigl((\hat{X}^{(1)},\hat{C}^{(2)})\bigr) & =\gamma^{k-1}(\theta-c)\kappa \ind_{\{j_1 + \cdots j_k = k+1\}} S^{j_1, \ldots, j_k}_{0,1}\biggl(\Bigl(\int_0^t (t -s) \, \circ dW^2_s,t\Bigr)\biggr)\\
\ind_{\{j_1 + \cdots j_k = k+1\}} S^{j_1, \ldots, j_k}_{0,1}\bigl((\hat{X}^{(2)},\hat{C}^{(1)})\bigr) & =\gamma (\theta-c)^{k-1}\frac{\kappa^{k-1}}{2^{k-1}} \ind_{\{j_1 + \cdots j_k = k+1\}} S^{j_1, \ldots, j_k}_{0,1}\bigl(\bigl(t , W_t^2\bigr)\Bigr)
\end{align*}
Combining terms with the same pre-factors yields the result. The constant $K_3$ can be calculated explicitly by writing out the (finitely many) terms of order $\delta^2$ and higher (up to order $\delta^{k+1}$) and estimating them.
\end{proof}

\begin{proof}[Proof of Lemma \ref{lem:BIGlem}]
We start by deriving some general estimates for further use. By the martingale representation theorem in the Brownian filtration, we know that $h \in \mathcal{H}^\infty$ has the decomposition
\[
h_t = h_0 + A_t + M_t = h_0 + A_t + \int_0^t \beta_t^1 \, dW_t^1 + \int_0^t \beta_t^2 \, dW_t^2.
\]
where $A$ is a process of finite total variation, $M$ a local martingale and $\beta^1$, $\beta^2$ are square integrable, predictable processes.
If follows in particular that for $i \in \{1,2\}$ we have
\begin{equation}\label{eq:Strat-comp}
\biggl\Vert \int_0^\cdot \beta^i_s \, ds \biggr\Vert_{\mathcal{H}^p} \leq \E \biggl[ \Bigl(\int_0^1 \bigl\vert \beta^i_s\bigr\vert \, ds\Bigr)^p \biggr]^\frac{1}{p} \leq \E \biggl[ \Bigl\langle\int_0^1  \beta^i_s \, dW^i \Bigr\rangle^\frac{p}{2} \biggr]^\frac{1}{p} \leq \bigl\Vert h \bigr\Vert_{\mathcal{H}^p}.
\end{equation}
Moreover, we have
\begin{align*}
\bigl\Vert h_t\bigr\Vert_p  &  \leq  \bigl \vert h_0\bigr\vert + \E\biggl[\sup_{t \in [0,1]} \vert A_t\vert^p\Bigr]^\frac{1}{p}  +  \E\biggl[\sup_{t \in [0,1]} \vert M_t \vert^p \Bigr]^\frac{1}{p} 
\leq \bigl\Vert h_0\bigr\Vert_{\mathcal{H}^p} + \bigl\Vert A \bigr\Vert_{\mathcal{H}^p} +  2\sqrt{2p} \E\Bigl[ \bigl\langle M \bigr\rangle_1^\frac{p}{2}\Bigr]^\frac{1}{p} \\& \leq 2\bigl(1+\sqrt{2}\bigr)  \sqrt{p} \bigl\Vert h \bigr\Vert_{\mathcal{H}^p},
\end{align*}
by the Burkholder--Davis--Gundy (BDG) inequality for continuous local martingales (\cite[Theorem 2]{Ren08}), proving o). More generally, for any bounded deterministic function $b$ with $\overline{b} := \sup_{t \in [0,1]} \vert b_t \vert $ we have again by the BDG inequality that
\begin{align}
\biggl\Vert \int_0^\cdot h_s b_s \, dW^i_s \biggr\Vert_{\mathcal{H}^p} & \leq \biggl\Vert \int_0^\cdot h_0 b_s \, dW^i_s \biggr\Vert_{\mathcal{H}^p} + \biggl\Vert \int_0^\cdot A_s b_s \, dW^i_s \biggr\Vert_{\mathcal{H}^p} + \biggl\Vert \int_0^\cdot M_s b_s \, dW^i_s \biggr\Vert_{\mathcal{H}^p}\nonumber \\
& \leq \overline{b} \E\biggl[ \Bigl(\int_0^1 h_0^2 \, dt \Bigr)^\frac{p}{2}\biggr]^\frac{1}{p} + \overline{b} \E\biggl[ \Bigl(\int_0^1 A_t^2 \, dt \Bigr)^\frac{p}{2}\biggr]^\frac{1}{p} + \overline{b} \E\biggl[ \Bigl(\int_0^1 M_t^2 \, dt \Bigr)^\frac{p}{2}\biggr]^\frac{1}{p} \nonumber\\
&  \leq \overline{b} \bigl \vert h_0\bigr\vert + \overline{b} \E\biggl[\sup_{t \in [0,1]} \vert A_t\vert^p\Bigr]^\frac{1}{p}  + \overline{b} \E\biggl[\sup_{t \in [0,1]} \vert M_t \vert^p \Bigr]^\frac{1}{p} \nonumber \\
& \leq \overline{b} \bigl\Vert h_0\bigr\Vert_{\mathcal{H}^p} + \overline{b}\bigl\Vert A \bigr\Vert_{\mathcal{H}^p} +  2\sqrt{2p} \overline{b} \E\Bigl[ \bigl\langle M \bigr\rangle_1^\frac{p}{2}\Bigr]^\frac{1}{p}  \leq 2\bigl(1+\sqrt{2}\bigr)  \sqrt{p} \overline{b}\bigl\Vert h \bigr\Vert_{\mathcal{H}^p}.\label{eq:SI-est}
\end{align}
Additionally, using the Cauchy--Schwarz inequality together with the BDG inequality gives
\begin{align}
& \phantom{=:}\biggl\Vert \int_0^\cdot h_s \int_0^s b_u \, dW^2_u ds \biggr\Vert_{\mathcal{H}^p} \nonumber \\
& \leq \bigl\vert h_0\bigr\vert \E\biggl[ \Bigl( \int_0^1 \Bigl\vert  \int_0^t b_s \, dW^2_s \Bigr\vert dt \Bigr)^p \biggr]^\frac{1}{p}  + \E\biggl[ \Bigl( \int_0^1 \Bigl\vert A_t \int_0^t b_s \, dW^2_s \Bigr\vert dt \Bigr)^p \biggr]^\frac{1}{p} + \E\biggl[ \sup_{t \in [0,1]} \Bigl\vert M_s \int_0^s b_u \, dW^2_u \Bigr\vert^p \biggr]^\frac{1}{p} \nonumber \\
&\leq  \bigl\vert h_0\bigr\vert \E\biggl[ \sup_{t \in [0,1]}  \Bigl\vert \int_0^t b_s \, dW^2_s \Bigr\vert^p\biggr]^\frac{1}{p} +\E\biggl[ \Bigl\vert \int_0^1 \bigl\vert dA_t \bigr\vert \Bigr\vert^p \cdot \sup_{t \in [0,1]}  \Bigl\vert \int_0^t b_s \, dW^2_s \Bigr\vert^p\biggr]^\frac{1}{p} \nonumber \\
& \phantom{==} +
\E\biggl[ \sup_{t \in [0,1]} \bigl\vert M_s \bigr\vert^p \cdot \sup_{t \in [0,1]}  \Bigl\vert \int_0^t b_s \, dW^2_s \Bigr\vert^p \biggr]^\frac{1}{p} \nonumber \\
&\leq 4\sqrt{p}  \bigl\vert h_0\bigr\vert  + \E\biggl[ \Bigl\vert \int_0^1 \bigl\vert dA_t \bigr\vert \Bigr\vert^{2p}  \biggr]^\frac{1}{2p} \cdot \E\biggl[ \sup_{t \in [0,1]} \Bigl\vert \int_0^s b_u \, dW^2_u \Bigr\vert^{2p} \biggr]^\frac{1}{2p} \nonumber\\
&\phantom{==}
+ \E\biggl[ \sup_{t \in [0,1]} \bigl\vert M_s \bigr\vert^{2p}  \biggr]^\frac{1}{2p} \cdot \E\biggl[ \sup_{t \in [0,1]} \Bigl\vert \int_0^s b_u \, dW^2_u \Bigr\vert^{2p} \biggr]^\frac{1}{2p} \nonumber\\
&\leq 2\sqrt{2}\sqrt{p} \overline{b} \bigl\vert h_0\bigr\vert  +  2\sqrt{2}\sqrt{p} \overline{b} \E\biggl[ \Bigl\vert \int_0^1 \bigl\vert dA_t \bigr\vert \Bigr\vert^{2p}  \biggr]^\frac{1}{2p}  + 2\sqrt{2}\sqrt{p} \E\Bigl[ \bigl\langle M \bigr\rangle^{p}  \Bigr]^\frac{1}{2p} \cdot 2\sqrt{2}\sqrt{p}\E\biggl[ \Bigl( \int_0^1 b_s^2 \, ds \Bigr)^p \biggr]^\frac{1}{2p} \nonumber \\
&\leq 2\sqrt{2}\sqrt{p} \overline{b} \bigl\Vert h \bigr\Vert_{\mathcal{H}^{2p}} + 2\sqrt{2}\sqrt{p} \overline{b} \bigl\Vert h \bigr\Vert_{\mathcal{H}^{2p}} + 8p \overline{b}\bigl\Vert h \bigr\Vert_{\mathcal{H}^{2p}} \leq 4\bigl(2 + \sqrt{2}\bigr)p \overline{b}\bigl\Vert h \bigr\Vert_{\mathcal{H}^{2p}} \label{eq:prod-ineq}
\end{align}
as $p \geq 1$. Based on these preliminary results we can now prove the estimates of the lemma.
\begin{itemize}
\item[i)] We have by switching from Stratonovitch to It\^{o} integration and applying  \eqref{eq:Strat-comp}, \eqref{eq:SI-est} and \eqref{eq:prod-ineq} that
\begin{align*}
\biggl\Vert \int_0^\cdot  h_s \, \circ dC^\delta_s \biggr\Vert_{\mathcal{H}^p} &  = \biggl\Vert \int_0^\cdot h_s \, \delta \kappa\bigl(\theta - C^\delta_t\bigr) \, dt + \sqrt{\delta}\gamma \int_0^\cdot h_s  \, \circ dW_t^2 \biggr\Vert_{\mathcal{H}^p} \\& = \biggl\Vert \int_0^\cdot h_s \, \delta \kappa\bigl(\theta - C^\delta_t\bigr) \, dt   + \sqrt{\delta}\gamma \int_0^\cdot h_s  \, dW_t^2 + \frac{1}{2}\sqrt{\delta}\gamma  \int_0^\cdot \beta^2_s \, dt \biggr\Vert_{\mathcal{H}^p}  \\
& \leq   \bigl\vert \theta - c \bigr\vert \kappa \delta \biggl\Vert \int_0^\cdot h_s   e^{-\kappa \delta s} \, ds \biggr\Vert_{\mathcal{H}^p} + \delta \kappa \biggl\Vert \int_0^\cdot h_s  \int_0^s e^{-\kappa \delta (s-u)}\, dW^2_u \, ds \biggr\Vert_{\mathcal{H}^p} \\
&\phantom{==} + \sqrt{\delta} \gamma \biggl\Vert \int_0^\cdot h_s  e^{-\kappa \delta (t-s)}\, dW^2_s \biggr\Vert_{\mathcal{H}^p} + \frac{1}{2}\sqrt{\delta}\gamma \biggl\Vert \int_0^\cdot \beta^2_s  \, ds \biggr\Vert_{\mathcal{H}^p} \\
&\leq   \bigl\vert \theta - c \bigr\vert \kappa \delta \bigl\Vert  h \bigr\Vert_{\mathcal{H}^p}  + 4\bigl(2 + \sqrt{2}\bigr)p\delta \kappa\bigl\Vert  h \bigr\Vert_{\mathcal{H}^{2p}}  + 2\bigl(1+\sqrt{2}\bigr) \sqrt{\delta}\gamma\bigl\Vert  h \bigr\Vert_{\mathcal{H}^p}  + \frac{1}{2}\sqrt{\delta}\gamma \bigl\Vert  h \bigr\Vert_{\mathcal{H}^p},\\
&\leq \sqrt{\delta} \Bigl(\bigl\vert\theta - c\bigr\vert \kappa \sqrt{\delta} + 4\bigl(2 + \sqrt{2}\bigr)\sqrt{\delta} \kappa  + 2\bigl(1+\sqrt{2}\bigr) \gamma + \frac{1}{2}\gamma\Bigr)p \bigl\Vert  h \bigr\Vert_{\mathcal{H}^{2p}}.
\end{align*}
Thus the results holds for  $\delta \leq 1$ with $K_4 := \bigl(\bigl\vert\theta - c\bigr\vert  + 4\bigl(2 + \sqrt{2}\bigr)\bigr) \kappa  + \bigl(2\sqrt{2} + \frac{5}{2}\bigr)\gamma$.
\item[ii)] Using the same line of argument as in part i)
\begin{align*}
\biggl\Vert \int_0^\cdot h_s \, \circ dX^\delta_s \biggr\Vert_{\mathcal{H}^p} & \leq \biggl\Vert \int_0^\cdot h_s  \Bigl(r  - \frac{\sigma^2}{2} - \theta + (\theta -c)e^{-\kappa \delta s} +  \sqrt{\delta} \gamma \int_0^s  e^{-\delta \kappa(s-u)} \, \circ dW^2_u\Bigr) \, ds \biggr. \\
&\phantom{==} +  \int_0^\cdot h_s  \sigma\, \circ dW^1_s\biggr) \biggr\Vert_{\mathcal{H}^p} \\
& = \biggl\Vert \int_0^\cdot h_s  \Bigl(r  - \frac{\sigma^2}{2} - \theta + (\theta -c)e^{-\kappa \delta s} +  \sqrt{\delta} \gamma \int_0^s  e^{-\delta \kappa(s-u)} \, dW^2_u\Bigr) \, ds\biggr. \\
&\phantom{==} \biggl. +  \int_0^\cdot h_s  \sigma\, dW^1_s + \frac{1}{2}\sigma  \int_0^\cdot \beta^1_s  \, ds \biggr) \biggr\Vert_{\mathcal{H}^p} \\
& \leq \Bigl\vert r - \frac{\sigma^2}{2}-\theta\Bigr\vert \bigl\Vert h \bigr\Vert_{\mathcal{H}^{p}} + \bigl\vert \theta-c\bigr\vert^2\bigl\Vert h \bigr\Vert_{\mathcal{H}^{p}} + 24p\sqrt{\delta} \gamma \bigl\Vert h \bigr\Vert_{\mathcal{H}^{2p}} \\
& \phantom{==} + 2\bigl(1+\sqrt{2}\bigr) \sqrt{p} \sigma \bigl\Vert h \bigr\Vert_{\mathcal{H}^p} + \frac{1}{2} \sigma \bigl\Vert h\bigr\Vert_{\mathcal{H}^p}\\
& \leq \biggl(\Bigl\vert r - \frac{\sigma^2}{2} -\theta\Bigr\vert+  \bigl\vert \theta-c\bigr\vert  + 4\bigl(2 + \sqrt{2}\bigr)\sqrt{\delta} \gamma + 2\bigl(1+\sqrt{2}\bigr)  \sigma + \frac{1}{2} \sigma \biggr) p \bigl\Vert h \bigr\Vert_{\mathcal{H}^{2p}}.
\end{align*}
Thus the results holds for  $\delta \leq 1$ with $K_5 := \bigl\vert r - \frac{\sigma^2}{2} -\theta\bigr\vert+  \bigl\vert \theta-c\bigr\vert +  4\bigl(2 + \sqrt{2}\bigr) \gamma + \bigl(2\sqrt{2}+ \frac{5}{2} \bigr)\sigma$.
\item[iii)] We note that
\[
\circ dC_t^\delta = \delta \kappa\bigl(\theta - C^\delta_t\bigr) \, dt + \sqrt{\delta}\gamma \, dW_t^2 = \delta \kappa\Bigl(\bigl(\theta - c\bigr)e^{-\delta \kappa t}- \sqrt{\delta} \gamma \int_0^t e^{-\delta \kappa(t-s)} \, dW^2_s\Bigr) \, dt + \sqrt{\delta}\gamma \, dW_t^2
\]
and
\begin{align*}
\circ dC_t^{(n)} & = \gamma \sqrt{\delta} \sum_{k=1}^{\lfloor \frac{n+1}{2}\rfloor} \delta^{k-1} \frac{(-\kappa)^{k-1}}{(k-1)!} d\Bigl(\int_0^t (t-s)^{k-1} \, dW^2_s\Bigr) + (c-\theta) \sum_{k=1}^{\lfloor \frac{n}{2}\rfloor} \delta^{k}\frac{(-\kappa)^{k}}{(k-1)!} t^{k-1} dt \\
& = \delta \kappa (\theta- c) \sum_{k=0}^{\lfloor \frac{n}{2}\rfloor -1} \delta^{k}\frac{(-\kappa)^{k}}{k!} t^{k} dt + \gamma \sqrt{\delta} \sum_{k=0}^{\lfloor \frac{n+1}{2}\rfloor -2} \delta^{k+1} \frac{(-\kappa)^{k+1}}{k!} \int_0^t (t-s)^{k} \, dW^2_s dt + \sqrt{\delta}\gamma \, dW_t^2.
\end{align*}
Therefore, using \eqref{eq:SI-est} and \eqref{eq:prod-ineq} as well as $\sum_{k=0}^{2n+1}\frac{(-x)^k}{k!} \leq e^{-x} \leq \sum_{k=0}^{2n}\frac{(-x)^k}{k!}$  we get
\begin{align*}
& \phantom{==}\biggl\Vert \int_0^\cdot h_t \, \circ dC^\delta_t - \int_0^\cdot h_t \, \circ dC^{(n)}_t \biggr\Vert_{\mathcal{H}^p} = \biggl\Vert   \int_0^\cdot h_t \, \circ d\bigl(C^\delta_t - C^{(n)}_t \bigr)\biggr\Vert_{\mathcal{H}^p}\\
& \leq \delta \kappa \bigl\vert \theta- c\bigr\vert \biggl\Vert \int_0^\cdot h_t \biggl(e^{-\delta \kappa t} - \sum_{k=0}^{\lfloor \frac{n}{2}\rfloor -1} \delta^{k}\frac{(-\kappa)^{k}}{k!} t^{k}\biggr)\, dt  \biggr\Vert_{\mathcal{H}^p} \\
& \phantom{==} + \gamma \kappa \sqrt{\delta}^3 \biggl\Vert \int_0^\cdot h_t \biggl(\int_0^t e^{-\delta \kappa(t-s)} -  \sum_{k=0}^{\lfloor \frac{n+1}{2}\rfloor -2} \delta^{k} \frac{(-\kappa)^{k}}{k!}(t-s)^{k}  \, dW^2_s\biggr) \, dt  \biggr\Vert_{\mathcal{H}^p} \\
& \leq \delta \kappa \bigl\vert \theta- c\bigr\vert \biggl\vert  e^{-\delta \kappa } - \sum_{k=0}^{\lfloor \frac{n}{2}\rfloor -1} \delta^{k}\frac{(-\kappa)^{k}}{k!}   \biggr\vert \bigl\Vert h \bigr\Vert_{\mathcal{H}^{p}} + 4\bigl(2 + \sqrt{2}\bigr) \gamma \kappa \sqrt{\delta}^3 p \biggl\vert  e^{-\delta \kappa} -  \sum_{k=0}^{\lfloor \frac{n+1}{2}\rfloor -2} \delta^{k} \frac{(-\kappa)^{k}}{k!} \biggr\vert   \bigl\Vert h  \bigr\Vert_{\mathcal{H}^{2p}} \\
& \leq \delta \kappa \bigl\vert \theta- c\bigr\vert  \delta^{\lfloor \frac{n}{2}\rfloor}\frac{\kappa^{\lfloor \frac{n}{2}\rfloor}}{\lfloor \frac{n}{2}\rfloor!}  \bigl\Vert h \bigr\Vert_{\mathcal{H}^{p}}+ 4\bigl(2 + \sqrt{2}\bigr) \gamma \kappa \sqrt{\delta}^3 p \delta^{\lfloor \frac{n+1}{2}\rfloor -1} \frac{\kappa^{\lfloor \frac{n+1}{2}\rfloor -1}}{(\lfloor \frac{n+1}{2}\rfloor -1)!}  \bigl\Vert h  \bigr\Vert_{\mathcal{H}^{2p}}.
\end{align*}
It follows that for $\delta \leq 1$
\[
\biggl\Vert \int_0^\cdot h_t \, \circ dC^\delta_t - \int_0^\cdot h_t \, \circ dC^{(n)}_t \biggr\Vert_{\mathcal{H}^p} \leq \sqrt{\delta}^{n+1} K_6  p \bigl\Vert  h \bigr\Vert_{\mathcal{H}^{2p}}
\]
with $K_6 = \bigl((\kappa \vee 1) \bigl\vert \theta - c \bigr\vert + 4\bigl(2 + \sqrt{2}\bigr) \gamma (\kappa \wedge 1)\bigr)\frac{\kappa^{\kappa+\frac{3}{2}}}{\Gamma(\kappa+1)}$ as  $\max_{z \in [0,\infty)} \frac{a^z}{\Gamma(z+1)} \leq \frac{a^{a+1}}{\Gamma(a+1)}$ for $a>0$.
\item[iv)] Similar to iii) we have here
\begin{align*}
& \phantom{:=} \biggl\Vert \int_0^\cdot h_s \, dX^\delta_s - \int_0^\cdot h_s \, dX^{(n)}_s \biggr\Vert_{\mathcal{H}^p}  = \Bigl\Vert \int_0^\cdot h_s \bigl(C^\delta_s - C^{(n)}_s\bigr) \, ds \Bigr\Vert_{\mathcal{H}^p}\\
& \leq \bigl\vert \theta - c \bigr\vert \Bigl\Vert \int_0^\cdot h_s  \Bigl(e^{- \delta\kappa  t}- \sum_{k=0}^{\lfloor \frac{n}{2}\rfloor} \frac{(-\delta\kappa t)^{k}}{k!} \Bigr) \, ds \Bigr\Vert_{\mathcal{H}^p} \\
& \phantom{==}+  \sqrt{\delta} \gamma \Bigl\Vert\int_0^\cdot  h_s \int_0^s  \biggl(e^{-\delta \kappa(t-s)} -\sum_{k=0}^{\lfloor \frac{n+1}{2}\rfloor-1} \frac{(-\delta\kappa (t-s))^{k}}{k!}\biggr)\, dW^2_u ds  \Bigr\Vert_{\mathcal{H}^p} \\
& \leq \bigl\vert \theta - c \bigr\vert \biggl\vert e^{- \delta\kappa}- \sum_{k=0}^{\lfloor \frac{n}{2}\rfloor} \frac{(-\delta\kappa)^{k}}{k!} \biggr\vert \bigl\Vert h \bigr\Vert_{\mathcal{H}^p} + 4\bigl(2 + \sqrt{2}\bigr) \sqrt{\delta} \gamma p \biggl\vert e^{-\delta \kappa} -\sum_{k=0}^{\lfloor \frac{n+1}{2}\rfloor-1} \frac{(\delta\kappa )^{k}}{k!}\biggr\vert  \bigl\Vert h \Bigr\Vert_{\mathcal{H}^p} \\
& \leq \bigl\vert \theta - c \bigr\vert \frac{(\delta\kappa )^{\lfloor \frac{n}{2}\rfloor +1}}{(\lfloor \frac{n}{2}\rfloor + 1)!} \bigl\Vert h \bigr\Vert_{\mathcal{H}^p} + 4\bigl(2 + \sqrt{2}\bigr) \sqrt{\delta} \gamma p  \frac{(\delta\kappa )^{\lfloor \frac{n+1}{2}\rfloor}}{\lfloor \frac{n+1}{2}\rfloor!}\biggr\vert  \bigl\Vert h \Bigr\Vert_{\mathcal{H}^{2p}}.
\end{align*}
It follows that for $\delta \leq 1$
\[
\biggl\Vert \int_0^\cdot h_t \, \circ dC^\delta_t - \int_0^\cdot h_t \, \circ dC^{(n)}_t \biggr\Vert_{\mathcal{H}^p} \leq \sqrt{\delta}^{n+1} K_6  p \bigl\Vert  h \bigr\Vert_{\mathcal{H}^{2p}}.
\]
\item[v)] Combining the results of i) and iii) we have
\begin{align*}
\biggl\Vert \int_0^\cdot h_s \, \circ dC^{(n)}_s \biggr\Vert_{\mathcal{H}^p} & \leq \biggl\Vert \int_0^\cdot h_s \, \circ dC^\delta_s \biggr\Vert_{\mathcal{H}^p} + \biggl\Vert \int_0^\cdot h_s \, \circ dC^{(n)}_s - \int_0^\cdot h_s \, \circ dC^\delta_s \biggr\Vert_{\mathcal{H}^p} \\
&\leq \sqrt{\delta} K_4 p \bigl\Vert  h \bigr\Vert_{\mathcal{H}^{2p}} + \sqrt{\delta}^{n+1}K_6 p \bigl\Vert  h \bigr\Vert_{\mathcal{H}^{2p}} \leq \sqrt{\delta} K_7 p \bigl\Vert  h \bigr\Vert_{\mathcal{H}^{2p}} 
\end{align*}
where $K_7 := K_4 + K_6$.
\item[vi)] In the same way as in v), combining the results of ii) and iv) we have
\begin{align*}
\biggl\Vert \int_0^t h_s \, \circ dX^{(n)}_s \biggr\Vert_{\mathcal{H}^p} \leq K_5 p \bigl\Vert  h \bigr\Vert_{\mathcal{H}^{2p}} + \sqrt{\delta}^{n+1}K_6 p \bigl\Vert  h \bigr\Vert_{\mathcal{H}^{2p}} \leq K_8 p \bigl\Vert  h \bigr\Vert_{\mathcal{H}^{2p}} 
\end{align*}
for $\delta \leq 1$ where $K_8 := K_5 + K_6$. 
\end{itemize}
\end{proof}

\end{document}